\newcommand{\Name}{\textit{VerifyML}\xspace}
\newenvironment{packeditemize}{
\begin{list}{$\bullet$}{
\setlength{\labelwidth}{8pt}
\setlength{\itemsep}{0pt}
\setlength{\leftmargin}{\labelwidth}
\addtolength{\leftmargin}{\labelsep}
\setlength{\parindent}{0pt}
\setlength{\listparindent}{\parindent}
\setlength{\parsep}{0pt}
\setlength{\topsep}{3pt}}}{\end{list}}
\newtheorem{theorem}{Theorem}[section]
\renewcommand{\raggedright}{\leftskip=0pt \rightskip=0pt plus 0cm}
\begin{document}
\title{VerifyML: Obliviously Checking Model Fairness Resilient to Malicious Model Holder}

\author{{Guowen~Xu, Xingshuo~Han, Gelei~Deng, Tianwei~Zhang, Shengmin~Xu, Jianting~Ning, Anjia~Yang, Hongwei~Li}
\IEEEcompsocitemizethanks{\IEEEcompsocthanksitem Guowen~Xu,Xingshuo~Han,Gelei~Deng and Tianwei~Zhang are with the School of Computer Science and Engineering, Nanyang Technological University. (e-mail: guowen.xu@ntu.edu.sg; xingshuo001@e.ntu.edu.sg; GDENG003@e.ntu.edu.sg; tianwei.zhang@ntu.edu.sg)
\IEEEcompsocthanksitem Shengmin~Xu  and Jianting~Ning are with the College of Computer and Cyber Security, Fujian
Normal University, Fuzhou, China (e-mail: smxu1989@gmail.com; jtning88@gmail.com)
\IEEEcompsocthanksitem  Anjia~Yang  is  with the College of Cyber Security, Jinan University, Guangzhou 510632, China.
(e-mail:anjiayang@gmail.com)
\IEEEcompsocthanksitem Hongwei~Li  is with the school of Computer Science and Engineering,  University of Electronic Science and Technology of China, Chengdu 611731, China.(e-mail: hongweili@uestc.edu.cn)}}

 \IEEEcompsoctitleabstractindextext{
\begin{abstract}
 \raggedright
 In this paper, we present \Name, the first secure inference framework  to check the fairness degree of a given Machine learning (ML) model. \Name  is generic and is immune to any obstruction by the malicious model holder during the verification process. We rely on secure two-party computation (2PC)  technology to implement \Name, and carefully customize a series of optimization methods to boost its performance for both linear and nonlinear layer execution. Specifically, (1) \Name allows the vast majority of the overhead to be performed offline, thus meeting the low latency requirements for online inference. (2) To speed up offline preparation, we first design novel homomorphic parallel computing techniques to accelerate the authenticated Beaver's triple (including matrix-vector and convolution triples) generation procedure. It achieves up to $1.7\times$ computation speedup and gains at least $10.7\times$ less communication overhead compared to state-of-the-art work. (3) We also present a new  cryptographic protocol to evaluate the activation functions of non-linear layers, which is $4\times$--$42\times$ faster  and has $>48\times$ lesser communication than existing 2PC  protocol against malicious parties. In fact,  \Name even beats the state-of-the-art semi-honest ML secure inference system! We provide formal theoretical analysis for \Name security and demonstrate its performance superiority on  mainstream ML models including ResNet-18 and LeNet.
\end{abstract}
\begin{IEEEkeywords}
 Privacy Protection, Deep Learning,  Cryptography.
\end{IEEEkeywords}}
\maketitle

\IEEEdisplaynotcompsoctitleabstractindextext

\IEEEpeerreviewmaketitle

\section{Introduction}
Machine learning (ML) systems are increasingly being used to inform and influence people's decisions, leading to algorithmic outcomes that have powerful implications for individuals and society. For example, most personal loan default risks are calculated by automated ML tools. This approach greatly speeds up the decision-making process, but as with any decision-making algorithm, there is a tendency to provide accurate results for the majority, leaving certain individuals and minority groups  disadvantaged  \cite{venturebeat,lexology}. This problem is widely defined as the unfairness of the ML model. It often stems from the underlying inherent human bias in the training samples, and a trained ML model amplifies this bias to the point of causing discriminatory decisions about certain groups and individuals.

 Actually, the unfairness of ML model entangles in every corner of society, not only being spied on in financial risk control. A prime example comes from COMPAS \cite{propublica}, an automated software used in US courts to assess the probability of criminals reoffending. A investigation of the software reveals a bias against African-Americans, i.e., COMPAS having a higher false positive rate for African-American offenders than white criminals, owing to incorrectly estimating their risk of reoffending.  Similar model decision biases pervade other real-world applications including childcare systems \cite{chouldechova2018case}, employment matching \cite{osoba2017intelligence}, AI chatbots, and ad serving algorithms \cite{howard2018ugly}. As mentioned earlier, these resulting unfair decisions stem from neglected biases and discrimination hidden in data and algorithms.

 To alleviate the above problems,  a series of recent works \cite{biswas2020machine,lahoti2020fairness,prost2021measuring,oneto2020fairness,mukherjee2020two} have proposed for formalizing measures of fairness for classification models, as well as their variants, in aim to provide instructions for verifying the fairness of a given model. Several evaluation tools have also been released that facilitate automated checks for discriminatory decisions in a given model. For example, Aequitas \cite{saleiro2018aequitas} as a toolkit provides testing of models against several bias and fairness metrics corresponding to different population subgroups. It feeds back test reports to developers, researchers and governments to assist them in making conscious decisions to avoid tending to harm specific population groups. IBM also offers a toolkit AI Fairness 360  \cite{bellamy2018ai}, which aims to bringing fairness research algorithms to the industrial setting, creating a benchmark where all fairness algorithms can be evaluated, and providing an environment for  researchers to share their ideas.

 Existing efforts in theory and tools have led the entire research community to work towards unbiased verification of the ML model fairness. However, existing verification mechanisms either require to white-box access the target model or require clients to send queries in plaintext to the model holder, which is impractical as it incurs a range of privacy concerns. Specifically, model holders are often reluctant to disclose model details because training a commercial model requires a lot of human cost, resources, and experience. Therefore, ML models, as precious intellectual property rights, need to be properly protected to ensure the company's competitiveness in the market. On the other hand, the queries that clients used to test model fairness naturally contain sensitive information, including loan records, disease history, and even criminal information. These highly private data should clearly be guaranteed confidentiality throughout the verification process. Hence, these requirements for privacy raises a challenging but meaningful question:

 \textit{Can we design a verification framework  that only returns the fairness of the model to the client and the parties cannot gain any private information?}

 We materialize the above question to a scenario where a client interacts with the model holder to verify the fairness of the model.  Specifically, before using the target model's inference service, the client sends a set of queries for testing fairness to the model holder, which returns inference results to the client enabling it to locally evaluate how fair the model is.  In such a scenario, the client is generally considered to be semi-honest since it needs to evaluate the model correctly for subsequent service. The model holder may be malicious, it may trick the client into believing that the model is of high fairness by arbitrarily violating the verification process. A natural solution to tackle such concerns is to leverage state-of-the-art generic 2PC  tools \cite{keller2018overdrive,keller2020mp,chen2020maliciously} that provide \textit{malicious security}. It guarantees that if either entity behaves maliciously, they will be caught and the protocol aborted, protecting privacy. However, direct grafting of  standard tools incurs enormous redundant overhead, including heavy reliance on zero-knowledge proofs \cite{feige1988zero}, tedious computational authentication and interaction \cite{hazay2018concretely} (see Section~\ref{Technical Intuition} for more details).

 To reduce the overhead, we propose \Name, a 2PC-based secure verification framework implemented on the  \textit{model holder-malicious} threat model. In  this model, the client is considered semi-honest but the model holder is malicious and can arbitrarily violate the specification of the protocol. We adaptively customize a series of optimization methods for \Name, which show much better performance than the fully malicious baseline. Our key insight is to move the vast majority of operations to the client to bypass cumbersome data integrity verification and reduce the frequency of interactions between entities. Further, we design  highly optimized methods to perform linear and nonlinear layer functions for ML, which brings at least $4-40\times$ speedup compared to state-of-the-art techniques.
 Overall, our contributions are as follows:

\begin{packeditemize}

\item  We leverage the hybrid combination of HE-GC to design \Name. In \Name, the execution of ML's linear layer is implemented by homomorphic encryption (HE) while the non-linear layer is performed by the garbled circuit (GC). VerifyML allows more than 95\% of operations to be completed in the offline phase, thus providing very low latency in the online inference phase. Actually, VerifyML's online phase even beats DELPHI \cite{mishra2020delphi}, the state-of-the-art scheme for secure ML inference against only semi-honest adversaries.

\item We design a series of optimization methods to reduce the overhead of the offline stage. Specifically, we design new homomorphic parallel computation methods, which are used to generate authenticated Beaver's triples, including matrix-vector and convolution triples, in a Single Instruction Multiple Data (SIMD) manner. Compared to existing techniques, we generate triples of matrix-vector multiplication without any homomorphic rotation operation, which is very computationally expensive compared to other homomorphic operations including addition and multiplication. Besides, we reduce the communication complexity of generating convolution triples (aka matrix multiplication triples) from  cubic to quadratic with faster computing performance.

\item  We  design computationally-friendly GC to perform activation functions of nonlinear layers (mainly ReLU). Our key idea is to minimize the number of expensive multiplication operations in the GC. Then, we use the GC as a one-time pad to simplify verifying the integrity of the input from the server. Compared to the state-of-the-art works, our non-linear layer protocol achieves at least an order of magnitude performance improvement.

 \item We provide formal theoretical analysis for \Name security and demonstrate its performance superiority on various datasets  and mainstream ML models including ResNet-18 and LeNet. Compared to state-of-the-art work,  our experiments show that \Name  achieves up to $ 1.7\times$ computation speedup and gains at least  $10.7\times$ less communication overhead for linear layer computation.  For non-linear layers, \Name is  also $4\times$--$42\times$ faster  and has $>48\times$ lesser communication than existing 2PC  protocol against malicious parties. Meanwhile, \Name demonstrates an encouraging online runtime boost by $32.6\times$ and  $32.2\times$ over existing works  on LeNet and ResNet-18, respectively,  and at least  an order of magnitude  communication cost reduction.
\end{packeditemize}


\section{Preliminaries}
\label{sec:PROBLEM STATEMENT}

\subsection{Threat Model}
\label{Threat Model}
We consider a secure ML inference scenario, where a model holder $P_0$ and a client $P_1$ interact with each other to evaluate the fairness of the target model. In such a \textit{model holder-malicious} threat model, $P_0$ holds the model $\mathbf{M}$  while the client owns the private test set  used to verify the fairness of the model. The client is generally considered to be semi-honest, that is, it follows the protocol's specifications in the interaction process for evaluating the fairness of the model unbiased. However, it is possible to infer model parameters by passively analyzing data streams captured during interactions. The model holder is malicious. It may arbitrarily violate the specification of the protocol to trick clients into believing that they hold a high-fairness model.  The network architecture is assumed
to be known to both $P_0$ and $P_1$. \Name aims to construct such a secure inference framework that enables $P_1$ to correctly evaluate the fairness of model without knowing any details of the model parameters, meanwhile, $P_0$ knows nothing about the client's input. We provide a formal definition of the threat model in Appendix~\ref{A:threat model}.

\subsection{Notations}
\label{sec:notations}
We use $\lambda$ and $\sigma$ to denote  the computational security parameter and the statistical security parameter, respectively. $[k]$ represents the set $\{1,2, \cdots k\}$ for $k>0$.  In our \Name, all the arithmetic operations are calculated in the field $\mathbb{F}_p$, where $p$ is a  a prime and we define $\kappa=\lceil \log p \rceil$. This means that there is a natural mapping for elements in $\mathbb{F}_p$ to $\{0,1\}^{\kappa}$. For example, $a[i]$ indicates  the $i$-th bit of $a$  on this mapping, i.e, $a=\sum_{i\in[\kappa]}a[i]\cdot 2^{i-1}$. Given two vectors $\mathbf{a}$ and $\mathbf{b}$, and an element $\alpha\in \mathbb{F}_p$, $\mathbf{a}+\mathbf{b}$ indicates the element-wise addition, $\alpha+\mathbf{a}$  and $\alpha\mathbf{a}$ mean that each component of $\mathbf{a}$ performs addition and multiplication with $\alpha$, respectively. $\mathbf{a}\ast \mathbf{b}$ represents the inner production between vectors  $\mathbf{a}$ and $\mathbf{b}$.  Similarly, given any function $f: \mathbb{F}_p \rightarrow \mathbb{F}_p$, $f(\mathbf{a})$ denotes evaluation of $f$ on  each component on $\mathbf{a}$. $a||b$ represents the concatenation of $a$ and $b$. $U_n$ is used to represent the uniform distribution on the set $\{0, 1\}^{n}$ for any $n>0$.

For ease of exposition, we consider an ML model, usually a neural network model $\mathbf{M}$, consisting of alternating linear and nonlinear layers. We assume that the specification of the linear layer is $ \mathbf{L}_1, \cdots \mathbf{L}_{m}$ and the non-linear layer is $f_1, \cdots, f_{m-1}$. Given an initial input (i.e. query) $\mathbf{x}_0$, the model holder will sequentially execute $\mathbf{v}_i=\mathbf{L}_i \mathbf{x}_{i-1}$ and $\mathbf{x}_i=f_i(\mathbf{v}_i)$. Finally, $\mathbf{M}$ outputs the inference result $\mathbf{v}_m=\mathbf{L}_m \mathbf{x}_{m-1}=\mathbf{M}(\mathbf{x}_0)$.

\subsection{ML Fairness Measurement}
Let $\mathcal{X}$ be the set of possible inputs and $\mathcal{Y}$ be the set of all possible labels. In addition, let $\mathcal{O}$ be a finite set related to fairness (e.g., ethnic group). We assume that $\mathcal{X}\times \mathcal{Y}\times\mathcal{O}$ is drawn from a probability space $\Omega$ with an unknown distribution $\mathcal{D}$, and use $\mathbf{M}(\mathbf{x})$ to denote the model inference result given an input $\mathbf{x}$. Based on these, we review the term of the \textit{empirical fairness gap} (EFG) \cite{segal2021fairness}, which is widely used to measure the fairness of ML models against a specific group. To formalize the formulation of EFG, we first describe the definition of \textit{conditional risk} as follows:

\begin{small}
 \begin{equation}
 \label{eq1}
\begin{split}
\digamma_o(\mathbf{M})=\mathop{\mathbb{E}}\limits_{(\mathbf{x}, y, o')\sim \mathcal{D}}[\mathbb{I}\{\mathbf{M}(\mathbf{x})\neq y\}|\mathit{o}'= \mathit{o}]
\end{split}
\end{equation}
\end{small}
Given a set of samples  $(\mathbf{x}, y, o')$ satisfying distribution $\mathcal{D}$, $\digamma_o(\mathbf{M})$ is the expectation of the number of misclassified entries in the test set that belong to group $o$, where $\mathbb{I}\{\Phi\}$ represents the indicator function with a predicate $\Phi$. Given an independent sample set $\Psi=\{ (\mathbf{x}^{(1)}, y^{(1)}, o^{(1)}), \cdots, (\mathbf{x}^{(t)}, y^{(t)}, o^{(t)})\}$$\sim$ $\mathcal{D}^{t}$, the \textit{empirical conditional risk} is defined as follows:

\begin{small}
 \begin{equation}
 \label{eq2}
\begin{split}
\tilde{\digamma}_o(\mathbf{M}, \Psi)=\frac{1}{t_o}\sum_{i=1}^{t}[\mathbb{I}\{\mathbf{M}(\mathbf{x}^{(i)})\neq y^{(i)}\}|\mathit{o}^{(i)}= \mathit{o}]
\end{split}
\end{equation}
\end{small}
where $t_o$ indicates the number of samples in  $\Psi$ from group $o$. Then,  we describe the term \textit{fairness gap (FG)}, which is used to measure the maximum margin of any two groups, specifically,
\begin{small}
 \begin{equation}
 \label{eq3}
\begin{split}
FG=\max_{o_o, o_1\in \mathcal{O}}|\digamma_{o_o}(\mathbf{M})-\digamma_{o_1}(\mathbf{M})|
\end{split}
\end{equation}
\end{small}
Likewise, the \textit{empirical fairness gap (EFG)} is defined as
\begin{small}
 \begin{equation}
 \label{eq4}
\begin{split}
EFG=\max_{o_o, o_1\in \mathcal{O}}|\tilde{\digamma}_{o_o}(\mathbf{M}, \Psi)-\tilde{\digamma}_{o_1}(\mathbf{M}, \Psi)|
\end{split}
\end{equation}
\end{small}

Lastly, we say a ML model $\mathbf{M}$ is $\epsilon$-\textbf{fair} on $(\mathcal{O}, \mathcal{D})$, if its fairness gap is smaller than $\epsilon$ with confidence $1-\delta$. Formally, a $\epsilon$-\textbf{fair} $\mathbf{M}$ is defined as satisfying the following conditions:
\begin{small}
 \begin{equation}
 \label{eq5}
\begin{split}
Pr\left[\max_{o_o, o_1\in \mathcal{O}}|\digamma_{o_o}(\mathbf{M})-\digamma_{o_1}(\mathbf{M})|>\epsilon\right]\leq \delta
\end{split}
\end{equation}
\end{small}
In practice, we usually replace \textit{FG} in Eqn.\ref{eq5} with \textit{EFG} to facilitate the measurement of fairness. Note that once the client gets enough predictions in the target model, it can locally evaluate the fairness of the model according to Eqn.\ref{eq5}.

\subsection{Fully Homomorphic Encryption}
\label{Fully Homomorphic Encryption}
Let the plaintext space be $\mathbb{F}_p$, informally, a Fully homomorphic encryption (FHE) under the public key encryption system usually contains the following algorithms:
\begin{packeditemize}

\item $\mathtt{KeyGen}(1^\lambda)\rightarrow (pk, sk)$. Taking the security parameter $\lambda$ as input, $\mathtt{KeyGen}$ is a random algorithm used to output the public key $pk$ and the corresponding secret key $sk$ required for homomorphic encryption.
\item $\mathtt{Enc}(pk, x)\rightarrow c$.  Given $pk$ and a plaintext $x\in\mathbb{F}_p$, the algorithm $\mathtt{Enc}$ outputs  a ciphertext $c$  encrypting  $x$.
\item $\mathtt{Dec}(sk, c)\rightarrow x$.  Taking $sk$ and a ciphertext $c$ as input, $\mathtt{Dec}$ decrypts $c$ and outputs the corresponding plaintext $x$.
\item $\mathtt{Eval}(pk, c_1, c_2, F)\rightarrow c'$. Given $pk$, two ciphertexts  $c_1$ and $c_2$,  and a function $F$, the algorithm $\mathtt{Eval}$ outputs  a ciphertext $c'$  encrypting  $F(c_1, c_2)$.
\end{packeditemize}
We require FHE to satisfy correctness, semantic security, and functional privacy\footnote{Functional privacy ensures that given a ciphertext $c$, which is an encrypted share of $F(x_1, x_2)$  obtained by homomorphically evaluating $L$, $c$ is indistinguishable from ciphertext $c'$  encrypting a share of $F'(x_1, x_2)$ for any $F'$. }. In \Name, we use the SEAL library \cite{sealcrypto} to implement the fully homomorphic encryption. In addition, we utilize ciphertext packing technology (CPT) \cite{smart2014fully} to encrypt multiple plaintexts  into a single ciphertext, thus enabling homomorphic computation in a SIMD manner.  Specifically, given two plaintext vectors $\mathbf{x}=(x_1, \cdots, x_n)$ and $\mathbf{x'}=(x_1', \cdots, x_n')$, we can pack $\mathbf{x}$ and $\mathbf{x}'$ into ciphertexts $c$ and $c'$ each of them containing $n$ plaintext slots. Homomorphic operations between $c$ and $c'$ including addition and multiplication are equivalent to performing the same element-wise operations on the corresponding plaintext slots. 

FHE also provides  algorithm $\mathtt{Rotation}$   to handle operations between data located in different plaintext slots. Informally, given a plaintext vector $\mathbf{x}=(x_0, \cdots, x_n)$ is encrypted into a single ciphertext $c$, $\mathtt{Rotation}(pk, c, j)$ transforms $c$ into another ciphertext $c'$ whose encrypted plaintext vector is $x'=(x_{j+1}, x_{j+2}, \cdots, x_1, \cdots, x_{j})$. In this way, data on different plaintext slots can be moved to the same position to achieve element-wise operations under ciphertext. In FHE, \textit{rotation} operations are computational expensive compared to homomorphic addition and multiplication operations. Therefore, the optimization criterion for homomorphic SIMD operations is to minimize the number of \textit{rotation} operations.

\subsection{Parallel Matrix Homomorphic Multiplication}
\label{Parallel Matrix Homomorphic Multiplication}
We review the parallel homomorphic multiplication method between arbitrary matrices proposed by \textit{Jiang et al.}\cite{jiang2018secure}, which will be used to accelerate the generation of authenticated triples for convolution in \Name. We take the homomorphic multiplication of two $d\times d$ dimensional matrices as an example. Specifically, given a $d\times d$ dimensional matrix $\mathbf{X}={(x_{i,j})}_{0\leq i,j<d}$, we first define four useful permutations, $\sigma$, $\tau$, $\phi$, and $\varphi$, over the field $\mathbb{F}_p^{d\times d}$. Let $\sigma(\mathbf{X})_{i,j}=\mathbf{X}_{i, i+j}$, $\tau(\mathbf{X})_{i,j}=\mathbf{X}_{i+j, j}$, $\phi(\mathbf{X})_{i,j}=\mathbf{X}_{i, j+1}$ and $\varphi(\mathbf{X})_{i,j}=\mathbf{X}_{i+1,j}$. Then for two square matrices $\mathbf{X}$ and  $\mathbf{Y}$ of order $d$, we can calculate the matrix multiplication between the two by the following formula:

\begin{small}
 \begin{equation}
 \label{eqq}
\begin{split}
\mathbf{X}\ast \mathbf{Y}=\sum_{k=0}^{d-1}(\phi^{k}\circ \sigma(\mathbf{X}) )\odot(\varphi^{k}\circ \tau (\mathbf{Y}))
\end{split}
\end{equation}
\end{small}
where $\odot$ denotes the element-wise multiplication. We provide a toy example of the multiplication of two $3\times3$ matrices in Figure~\ref{Fig:matrix multiplication} for ease of understanding.

We can convert a $d\times d$-dimensional matrix to a vector of length $d^2$ by encoding map $\mathbb{F}_p^{d^2}\rightarrow \mathbb{F}_p^{d\times d}$: $\mathbf{x}=(x_0, \cdots, x_{d^2-1})\mapsto \mathbf{X}={(x_{d\cdot i+j})}_{0\leq i,j<d}$. A ciphertext is said to encrypt a matrix $\mathbf{X}$ if it encrypts the corresponding plaintext vector $\mathbf{x}$. Therefore, given two square matrices $\mathbf{X}$ and $\mathbf{X}$, the multiplication of the two under the ciphertext is calculated as follows:
\begin{small}
 \begin{equation}
 \label{eqq}
\begin{split}
\mathbf{c}_\mathbf{X}\circledast \mathbf{c}_\mathbf{Y}=\sum_{k=0}^{d-1}(\phi^{k}(\mathtt{Enc}_{pk}(\sigma (\mathbf{X})) ))\boxtimes(\varphi^{k}(\mathtt{Enc}_{pk}( \tau(\mathbf{Y}))))
\end{split}
\end{equation}
\end{small}
\begin{figure*}[htb]
\centering
\includegraphics[width=0.95\textwidth]{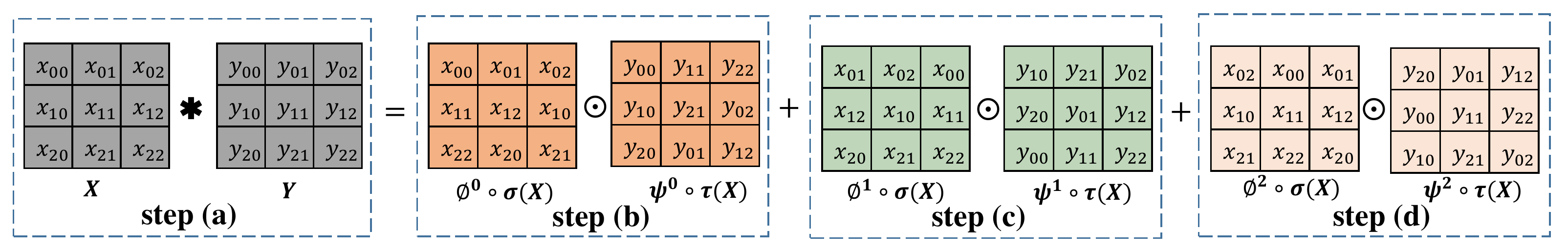}
\caption{Parallel matrix multiplication}
\label{Fig:matrix multiplication}
\vspace{-15pt}
\end{figure*}

 In the following sections, we will  use $\mathbf{c}_\mathbf{X}\circledast \mathbf{c}_\mathbf{Y}$ to represents  multiplication between any matrixes  $\mathbf{X}$  and $\mathbf{Y}$ in ciphertext.  $\boxtimes$ denotes the elewent-wise homomorphic multiplication between two ciphertexts. In Section~\ref{Generating convolution triple}, we describe how to utilize the parallel homomorphic multiplication described above to boost the generation of authenticated convolution triples.
\subsection{Secret Sharing}
\label{Secret Sharing}

\begin{packeditemize}
\item \textbf{Additive Secret Sharing}. Given any $x \in \mathbb{F}_p$, a 2-out-of-2 additive secret sharing of $x$ is a pair $(\left \langle x \right \rangle_0, \left \langle x \right \rangle_1)=(x-r, r)\in \mathbb{F}_{p}^{2}$, where $r$ is a random value uniformly selected from $\mathbb{F}_p$, and $x=\left \langle x \right \rangle_0+\left \langle x \right \rangle_1$.  Additive secret sharing is perfectly hiding, that is, given a share $\left \langle x \right \rangle_0$ or $\left \langle x \right \rangle_1$, $x$ is perfectly hidden.
\item \textbf{Authenticated Shares}. Given a random value $\alpha$  (known as the MAC key)  uniformly chosen  from $\mathbb{F}_p$, for any $x \in \mathbb{F}_p$, the authenticated shares  of $x$ on $\alpha$ denote that each party $P_b$ holds $[\![x]\!]_b=\{ \langle \alpha \rangle_b, \langle x\rangle_b, \langle \alpha x\rangle_b\}_{b\in \{0, 1\}}$\footnote{Sometimes in $[\![x]\!]_b$ we omit $\langle \alpha \rangle_b$ for brevity.}, where we have $(\langle \alpha \rangle_0 +\langle \alpha \rangle_1)\times (\langle x\rangle_0+ \langle x\rangle_1)=(\langle \alpha x\rangle_0+\langle \alpha x\rangle_1)$. While in the general  malicious 2PC setting, $\alpha$ should be generated randomly through interactions between all parties, in our \textit{ model holder-malicious} model, $\alpha$ can be picked up by  $P_1$ and secretly shared with $P_0$.  Authenticated sharing provides $\lfloor \log p \rfloor$ bits of statistical security. Informally, if a malicious $P_0$ tries to forge the shared $x$ to be $x+\beta$, by tampering with its shares  $(\langle x\rangle_0, \langle \alpha x\rangle_0)$ to $(\langle x \rangle_0+\beta, \langle \alpha x\rangle_0+\beta')$, for non-zero $\{ \beta, \beta'\}$, the probability of parties being authenticated to hold the share of $x+\beta$ (\textit{i.e.}, $\alpha x+\beta'=\alpha(x+\beta)$) is at most $2^{-\lfloor \log p \rfloor}$.
\end{packeditemize}

%
\subsection{Authenticated Beaver's Triples}
\label{Authenticated Beaver¡¯s Triples}
In \Name, we require the  technique of authenticated Beaver's triples to detect possible breaches of the protocol from the malicious model holder. In more detail, authenticated Beaver's multiplication triple  is denoted that  each  $P_b$ holds a tuple $\{[\![x]\!]_b, [\![y]\!]_b, [\![z]\!]_b\}_{b\in \{0, 1\}}$, where $x$, $y$, $z\in \mathbb{F}_p$, and satisfy $xy=z$. Giving  $P_0$ and $P_1$ holding  authenticated shares  of $c$ and $d$,  i.e., ($[\![c]\!]_0$, $[\![d]\!]_0$), ($[\![c]\!]_1$, $[\![d]\!]_1$), respectively,  to compute the authenticated share of the product of $c$ and $d$,  the parties first reveal $c-x$ and $d-y$, and then each party $P_b$  locally computes the authenticated share of $[\![e=c\cdot d]\!]_b$ as follows:

\begin{footnotesize}
 \begin{equation}
 \label{eq6}
\begin{split}
\langle e \rangle_b&=(c-x)\cdot (d-y)+\langle x \rangle_b\cdot (d-y)+(c-x)\cdot \langle y \rangle_b +\langle z \rangle_b\\
\langle \alpha e \rangle_b&=\langle \alpha \rangle_b (c-x)\cdot (d-y)+\langle  \alpha x \rangle_b\cdot (d-y)+(c-x)\cdot \langle \alpha y \rangle_b +\langle  \alpha z \rangle_b
\end{split}
\end{equation}
\end{footnotesize}

Authenticated  Beaver's multiplication triple is independent of the user's input in the actual execution of the secure computing protocol, thus can be generated offline (see Section~\ref{sec:The VerifyML Framework}) to speed up the performance of online secure multiplication computations. Inspired by existing work to construct custom triples for specific mathematical operations  \cite{ mohassel2017secureml} for improving performance, we generalize traditional Beaver's triples to matrix-vector multiplication and convolution domains.  We provide the definitions of matrix-vector and convolution triples below and leave the description of generating  them to Section~\ref{sec:The VerifyML Framework}.
\begin{packeditemize}

\item \textbf{ Authenticated Matrix-Vector  triples}: is denoted that  each  $P_b$ holds a tuple $\{[\![\mathbf{X}]\!]_b, [\![\mathbf{y}]\!]_b, [\![\mathbf{z}]\!]_b\}_{b\in \{0, 1\}}$, where $\mathbf{X}$  is a matrix uniformly chosen from $\mathbb{F}_p^{d_1\times d_2}$, $\mathbf{y}$ represents a vector selected from  $\mathbb{F}_p^{d_2}$, and  $\mathbf{z}\in \mathbb{F}_p^{d_1}$ satisfying $\mathbf{X}\ast \mathbf{y}=\mathbf{z}$, where $d_1$ and  $d_2$ are determined depending on the ML model architecture.
\item \textbf{Authenticated Convolution triples} (aka matrix multiplication triples\footnote{We can reduce the convolution operation to matrix multiplication by transforming the inputs of convolution appropriately. We provide a detailed description in Section~\ref{sec:The VerifyML Framework}.}):  is denoted that  each  $P_b$ holds a tuple $\{[\![\mathbf{X}]\!]_b, [\![\mathbf{Y}]\!]_b, [\![\mathbf{Z}]\!]_b\}_{b\in \{0, 1\}}$, where $\mathbf{X}$   and $\mathbf{Y}$ are tensors  uniformly chosen from $\mathbb{F}_p^{u_w\times u_h\times c_i}$ and  $\mathbb{F}_p^{(2l+1)\times (2l+1)\times c_i\times c_o}$, respectively. $\mathbf{Z}\in \mathbb{F}_p^{u_w'\times u_h'\times c_o}$ satisfying convolution $\mathtt{Conv}(\mathbf{X}, \mathbf{Y})=\mathbf{Z}$, where  $u_w$,  $u_w$, $u_w'$, $u_h'$, $l$, $c_i$ and $c_o$ are determined depending on the model architecture.
\end{packeditemize}

\subsection{Oblivious Transfer}
\label{Oblivious Transfer}
We take OT$_{n}$ to denote the 1-out-of-2 Oblivious Transfer (OT) \cite{grilo2021oblivious,dottling2020two}. In OT$_{n}$,  the inputs of the sender (assuming $P_0$ for convenience) are two strings $s_0, s_1 \in\{0, 1 \}^{n}$, and the input of the receiver ($P_1$) is a  bit $b\in \{0, 1\}$ for selection. At the end of the OT-execution, $P_1$ learns  $s_b$ while $P_0$ learns  nothing.    In this paper, we require that the instance of OT$_{n}$ is secure against a semi-honest sender and a malicious receiver.  We use OT$_{n}^{\kappa}$ to represent $\kappa$ instances of OT$_{n}$. We exploit \cite{keller2015actively} to implement OT$_{n}^{\kappa}$  with the communication complexity of $\kappa{\lambda+2n}$ bits.

\subsection{Garbled Circuits}
\label{garbled circuits}
The garbling scheme \cite{rosulek2021three,ciampi2021threshold} for boolean circuits parsing arbitrary functions  consists of a pair of algorithms ($\mathtt{Garble}$, $\mathtt{GCEval}$) defined as follows:
\begin{packeditemize}
\item $\mathtt{Garble}(1^\lambda, C)\rightarrow (\mathtt{GC}, \{ \{\mathtt{lab}_{i,j}^{in}\}_{i\in[n]},\{\mathtt{lab}_{j}^{out}\}\}_{j\in\{0,1\}})$. Giving the security parameter $\lambda$ and an arbitrary Boolean circuit $C: \{0,1 \}^{n}\rightarrow \{0, 1\}$, the algorithm $\mathtt{Garble}$ outputs a garbled circuit $\mathtt{GC}$, a set of input labels  $\{\mathtt{lab}_{i,j}^{in}\}_{i\in[n], j\in\{0,1\}}$  of this $\mathtt{GC}$, and a set of output labels  $\{\mathtt{lab}_{j}^{out}\}_{j\in\{0,1\}}$, where the size of each label is $\lambda$ bits. For any $x\in \{0,1\}^{n}$, we refer to $\{\mathtt{lab}_{i,x[i]}^{in}\}_{i\in[n]}$ as the \textit{garbled input} of $x$, and $\mathtt{lab}_{ C(x)}^{out}$ as the \textit{garbled output} of $C(x)$.
\item $\mathtt{GCEval}(\mathtt{GC}, \{\mathtt{lab}_{i}\}_{i\in[n]})\rightarrow \mathtt{lab'}$. Giving the garbled circuit $\mathtt{GC}$ and a set of input labels $\{\mathtt{lab}_{i}\}_{i\in[n]}$, the algorithm $\mathtt{GCEval}$  outputs a label $\mathtt{lab'}$.
\end{packeditemize}
Let $\mathtt{Garble}(1^\lambda, C)\rightarrow (\mathtt{GC}, \{ \{\mathtt{lab}_{i,j}^{in}\}_{i\in[n]},\{\mathtt{lab}_{j}^{out}\}\}_{j\in\{0,1\}})$,
the above garbled scheme ($\mathtt{Garble}$, $\mathtt{GCEval}$) is required to satisfy the following properties:
\begin{packeditemize}
\item \textbf{Correctness}.  $\mathtt{GCEval}$ is faithfully performed on the $\mathtt{GC}$ and correctly outputs garbled results when given the garbled input  of $x$. Formally, for any Boolean circuit $C$ and input $x\in \{0,1\}^{n}$, $\mathtt{GCEval}$ holds that
 $$\mathtt{GCEval}(\mathtt{GC}, \{\mathtt{lab}_{i, x[i]}^{in}\}_{i\in[n]})\rightarrow \mathtt{lab}_{ C(x)}^{out}$$
\item \textbf{Security}.  Given $C$, the garbled circuit $\mathtt{GC}$ of $C$ and garbled inputs of any $x\in \{0,1\}^{n}$ can be simulated by a  polynomial probability-time simulator $\mathtt{Sim}$. Formally, for any circuit $C$ and input $x\in \{0,1\}^{n}$, we have $(\mathtt{GC},\{\mathtt{lab}_{i, x[i]}^{in}\}_{i\in[n]})\approx \mathtt{Sim}(1^\lambda, C)$, where $\approx$ indicates computational indistinguishability.
\item \textbf{Authenticity}. This implies that given the garbled input of $x$ and $\mathtt{GC}$, it is infeasible to guess the output label of $1-C(x)$. Formally, for any circuit $C$ and $x\in \{0,1\}^{n}$, we have $\left( \mathtt{lab}_{ 1-C(x)}^{out}|\mathtt{GC},\{\mathtt{lab}_{i, x[i]}^{in}\}_{i\in[n]}\right)\approx U_{\lambda}$.
\end{packeditemize}
Without loss of generality, the garbled scheme described above can be naturally extended to securely implement Boolean circuits with multi-bit outputs. In \Name,   we  utilize state-of-the-art optimization strategies, including point-and-permute \cite{frederiksen2013minilego},
free-XOR \cite{kolesnikov2014flexor} and half-gates \cite{zahur2015two} to construct the garbling scheme.


\section{Technical Intuition}
\label{Technical Intuition}
 \Name is essentially a 2PC protocol over  the \textit{ model holder-malicious} threat model, where the client unbiasedly learns the inference results on a given test set, thereby faithfully evaluating the fairness of the target model locally. For boosting the performance of the 2PC protocol execution, we customize a series of optimization methods by fully exploring the advantages of cryptographic primitives and their natural ties in inference process. Below we present a high-level technically intuitive overview of VerifyML's design.

\subsection{Offline-Online Paradigm}
\label{Offline-Online Paradigm}
Consistent with state-of-the-art work on the setting of semi-honest models \cite{mishra2020delphi}, \Name is deconstructed into  an offline stage and an online stage, where the preprocessing process of the offline stage is independent of the input of model holders and clients. In this way, the majority ($>95\%$) of the computation can be performed offline to minimize the overhead of the online process. Figure~\ref{Overview of the VerifyML} provides an overview of \Name, where we describe the computational parts required for the offline and online phase, respectively.

 \renewcommand\tablename{Figure}
\renewcommand \thetable{\arabic{table}}
\setcounter{table}{1}
\begin{table}[!htb]
\centering
\small
\begin{tabular}{|p{8.0cm}|}
\Xhline{1pt}
\textbf{Offline Phase}.  This phase the client and model holder pre-compute data in  preparation for subsequent online execution, which is independent of input from all parties. That is, \Name can run this phase without knowing the client's input $\mathbf{x}_0$ and the model holder's input $\mathbf{M}$.
\begin{packeditemize}
\item \textit{Preprocessing for the linear layer}. The Client interacts with the model holder to generate authenticated triples for matrix-vector multiplication  and convolution.
\item \textit{Preprocessing for the nonlinear layer}. The client constructs a garbled circuit $\mathtt{GC}$ for circuit C parsing  ReLU. The client sends $\mathtt{GC}$ and a set of ciphertexts to the model holder for generating the authenticated shares of ReLU's results.
\end{packeditemize}\\
\Xhline{1pt}
\textbf{Online Phase}. This  phase is divided into following parts.
\begin{packeditemize}
\item \textit{Preamble}. The client secretly shares its input $\mathbf{x}_0$ with the model holder, and similarly, the model holder  shares the model parameter $\mathbf{M}$ with the client. Thus both the model holder and the client hold an authenticated share of $\mathbf{x}_0$ and $\mathbf{M}$. Note that the sharing of $\mathbf{M}$ can be done offline, if the model to be verified is knowed in advance.
\item \textit{Layer evaluation}. Let $\mathbf{x}_i$  be the result of evaluating the first $i$ layers of model $\mathbf{M}$  on $\mathbf{x}_0$. At the beginning of the $i+1$-th layer, both the client and the model holder hold an authenticated share about $\mathbf{x}_i$ and the  $i+1$-th layer parameter $\mathbf{L}_{i+1}$, i.e., parties ${P_b}_{\in \{0, 1\}}$ hold $([\![\mathbf{x}_i]\!]_b, [\![\mathbf{L}_{i+1}]\!]_b)$.
 \begin{packeditemize}
\item [1.] \textit{Linear layer }.  The client interacts with the model holder to perform the authenticated shares of $\mathbf{v}_{i+1}=\mathbf{L}_{i+1} \mathbf{x}_{i+1}$, where both parties securely compute matrix-vector multiplication and convolution operations with the aid of triples generated in the precomputing process.

\item [2.] \textit{Nonlinear layer}.  After the linear layer, the two parties hold the authenticated shares of $\mathbf{v}_{i+1}$. The client and the model holder invoke the OT to send the garbled input of $\mathtt{GC}$  to the model holder. The model holder evaluates the $\mathtt{GC}$, and eventually the two parties get authenticated shares of the ReLU result.
 \end{packeditemize}
\item \textit{Consistency check}. The client interacts with the model holder to check any  malicious behavior of the model holder during the entire inference process. The client uses the properties  of the authenticated sharing to construct the consistency check protocol. If consistency passes, the client locally computes the fairness of the target model, otherwise the client outputs abort.
\end{packeditemize}\\
\hline
\end{tabular}
\caption{Overview of the \Name}
\label{Overview of the VerifyML}
\end{table}
\subsection{Linear Layer Optimization}
\label{Linear layer optimization}
 As described in Figure~\ref{Overview of the VerifyML}, we move almost all linear operations into the offline phase, where we construct customized triples for matrix-vector multiplication and convolution to accelerate linear execution. Specifically, 1) we design an efficient construction of matrix-multiplication triples instead of generating  Beaver's multiplication triples for individual multiplications  (see Section~\ref{Generating matrix-vector multiplication triple}).  Our core insight is a new packed homomorphic multiplication method for matrices and vectors. We explore the inherent connection between secret sharing and homomorphic encryption to remove all the rotation operation in parallel homomorphic computation. 2) We extend the idea of generating  matrix multiplicative triples over semi-honest models \cite{ mohassel2017secureml} into convolution domain over the \textit{ model holder-malicious} threat model (see Section~\ref{sec:The VerifyML Framework}). The core of our construction is derived from E2DM \cite{jiang2018secure}, which proposes a state-of-the-art method for parallel homomorphic multiplication between arbitrary matrices. We further optimize  E2DM to achieve at least $2\times$ computational speedup compared to naive use.

 Our optimization technique for linear layer computation exhibits superior advantages compared to state-of-the-art existing methods \cite{keller2018overdrive,keller2020mp}\footnote{Note that several efficient parallel homomorphic computation methods \cite{juvekar2018gazelle,zhang2021gala} with packed ciphertext have been proposed and run on semi-honest  or client-malicious models \cite{mishra2020delphi,lehmkuhl2021muse,chandran2021simc} for secure inference. It may be possible to transfer these techniques to our method to speed up triple's generation, but this is certainly non-trivial and we leave it for future work.}. In more detail, we reduce the communication overhead from cubic to quadratic (both for offline and online phases) compared to Overdrive \cite{keller2018overdrive}, which is the mainstream tool for generating authenticated multiplicative triples on malicious adversary models(see Section~\ref{sec:The VerifyML Framework} for detailed analysis). 

\subsection{Non-linear Layer Optimization}
\label{non-linear layer optimization}
We use the garbled circuit to achieve secure computation of nonlinear functions (mainly ReLU) in ML models. Specifically,  assumed that $P_0$ and $P_1$ learn the authenticated sharing about $\mathbf{v}_i=\mathbf{L}_i \mathbf{x}_{i-1}$  after executing the $i$-th linear layer, that is, each party $P_b$ holds $[\![\mathbf{v}_i]\!]_b=\{ \langle \alpha \rangle_b, \langle \mathbf{v}_i\rangle_b, \langle \alpha \mathbf{v}_i\rangle_b\}_{b\in \{0, 1\}}$. Then, $\{\langle \mathbf{v}_i\rangle_b\}_{b\in \{0, 1\}}$ will be used as the input of  ReLU (dented as $f_i$  for brevity) in the $i$-th nonlinear layer for both parties learning the authentication sharing about $\mathbf{x}_i=f_i(\mathbf{v}_i)$, i.e., $[\![\mathbf{x}_i]\!]_b$. However, constructing such a satisfactory garbling scheme has the following intractable problems.
\begin{packeditemize}
\item \textit{How to validate input from the malicious model holder}. Since the model holder is malicious, it must be ensured that the input from the model holder in the $\mathtt{GC}$ (i.e. $\langle \mathbf{v}_i\rangle_0$) is consistent with the share obtained by the previous linear layer. In the traditional malicious adversary model \cite{keller2018overdrive,keller2020mp,chen2020maliciously}, a standard approach is to verify the correctness of the authenticated sharing of all inputs from malicious entities in the $\mathtt{GC}$. However, this is very expensive and takes tens of seconds or even minutes to process a ReLU function. It obviously does not meet the practicality of ML model inference because a modern ML model usually contains thousands of ReLU functions.
\item \textit{How to minimize the number of multiplication encapsulated into $\mathtt{GC}$}.  For the $i$-th nonlinear layer, we need to compute the authenticated shares of the ReLU output, i.e. $[\![\mathbf{x}_i]\!]_b=\{ \langle \alpha \rangle_b, \langle \mathbf{x}_i\rangle_b, \langle \alpha \mathbf{x}_i\rangle_b\}_{b\in \{0, 1\}}$. This requires at least two multiplications on the field, if all computations are encapsulated into the $\mathtt{GC}$. Note that  performing arithmetic multiplication operations in the $\mathtt{GC}$ is expensive and requires at least $O(\kappa^2\lambda)$ communication overhead.
\end{packeditemize}

We design novel protocols to remedy the above problems through the following  insights: (1) garbled
circuits already achieve malicious security against garbled circuit evaluators (i.e., the model holder in our setting) \cite{lehmkuhl2021muse}. This means that we only need to construct a lightweight method  to check the consistency between the input of the malicious adversary in the nonlinear layer and the results obtained by the previous linear layer. Then, this method can be integrated with $\mathtt{GC}$ to achieve end-to-end nonlinear  secure computing (see Section~\ref{sec:The VerifyML Framework}). (2)  It is enough to calculate the output label for each bit of  $f_i(\mathbf{v}_i)$'s  share (i.e., $f_i(\mathbf{v}_i)[j]$, for $1\leq j\leq \kappa$) in the GC, rather than obtaining the exact arithmetic share of $f_i(\mathbf{v}_i)$  \cite{chandran2021simc}. Moreover, we can parse ReLU function as $ReLU(\mathbf{v}_i)=\mathbf{v}_i\cdot sign(\mathbf{v}_i)$, where the sign function $sign(\mathbf{v}_i)$ equals 1 if $t\geq 0$ and 0 otherwise. Hence,  we only encapsulate  the non-linear part of $ReLU(\mathbf{v}_i)$ (\textit{i.e.}, $sign(\mathbf{v}_i)$)  into the $\mathtt{GC}$, thereby  substantially minimizing  the number of multiplication operations.

 Compared with works \cite{keller2018overdrive,keller2020mp,chen2020maliciously} with  malicious adversary, \Name reduces the communication overhead of each ReLU function from $2c\lambda+190\kappa\lambda+232\kappa^2$ to $2d\lambda+4\kappa\lambda+6\kappa^2$, where $d\ll c$. Our experiments show that \Name  achieves $4\times$-$42\times$ computation speedup and gains $48\times$ less communication overhead for nonlinear layer computation.

 \textit{Remark 3.1 }. Beyond the above optimization strategies, we also do a series of strategies to reduce the overhead in the implementation process, including  removing the reliance on distributed decryption primitives in previous works \cite{keller2018overdrive,keller2020mp,chen2020maliciously} and minimizing the number of calls to zero-knowledge proofs of ciphertexts. In  the following section, we provide a comprehensive technical description of the proposed method.

\section{The VerifyML Framework}
\label{sec:The VerifyML Framework}
\subsection{Offline Phase}
\label{Offline phase}
In this section, we describe the technical details of \Name. As described above,  \Name is divided intooffline and online phases.  We first describe the operations that need to be precomputed  in the offline phase, including generating matrix-vector multiplications and triples for convolution, and garbled circuits for constructing the objective function. Then, we introduce the technical details of the online phase.
\subsubsection{\quad\; Generating matrix-vector multiplication triple}
\label{Generating matrix-vector multiplication triple}
Figure~\ref{Algorithm of generating matrix-vector multiplication triple} depicts the interaction between the model holder $P_0$ and the client $P_1$ to generate triples of matrix-vector multiplications. Succinctly, $P_0$ first uniformly selects $\langle \mathbf{X}\rangle_0$ and $ \langle \mathbf{y}\rangle_0 $ and sends their encryption to $P_1$, along with  zero-knowledge proofs about these ciphertexts, where $ \langle \mathbf{y}\rangle_0 $ need to be transformed into matrix $ \langle \mathbf{Y}\rangle_0 $  before encryption (step 2 in Figure~\ref{Algorithm of generating matrix-vector multiplication triple}). $P_1$ recovers $\mathbf{X}$ and $\mathbf{Y}$ in ciphertext and then computes $(\langle \alpha \mathbf{X}\rangle_0, \langle \alpha \mathbf{Y}\rangle_0, \langle \alpha \mathbf{Z}\rangle_0, \langle \mathbf{Z} \rangle_0)$ (step 3 in Figure~\ref{Algorithm of generating matrix-vector multiplication triple}). Then it returns the corresponding ciphertexts to $P_0$. $P_0$ decrypts them and computes $\langle \alpha \mathbf{y}\rangle_1$, $\langle \alpha \mathbf{z}\rangle_1$ and $\langle \mathbf{z} \rangle_1$ (step 4 in Figure~\ref{Algorithm of generating matrix-vector multiplication triple}).
\renewcommand\tablename{Figure}
\renewcommand \thetable{\arabic{table}}
\setcounter{table}{2}
\begin{table}[htb]
\centering
\small
\begin{tabular}{|p{8.0cm}|}
\hline \\
 \textbf{Input:} $\{P_b\}_{b\in \{0, 1\}}$ holds $\langle \mathbf{X}\rangle_b$  uniformly chosen from $\mathbb{F}_p^{d_1\times d_2}$, and $ \langle \mathbf{y}\rangle_b $ uniformly chosen from $\mathbb{F}_p^{d_2}$.  In addition, $P_1$ hold a MAC key $\alpha$ uniformly chosen from $\mathbb{F}_p$.\\
 \textbf{Output:} $P_b$ obtains $\{[\![\mathbf{X}]\!]_b, [\![\mathbf{y}]\!]_b, [\![\mathbf{z}]\!]_b\}_{b\in \{0, 1\}}$ where $\mathbf{X}\ast \mathbf{y}=\mathbf{z}$.\\
 \textbf{Procedure}:
 \begin{packeditemize}
    \item[1.]   $P_0$ and  $P_1$ participate in a secure two-party computation such that $P_0$ obtains an FHE public secret key pair ($pk$, $sk$) while $P_1$ obtains the public key $pk$. This process is performed only once.
    \item[2.] $P_0$ first converts $\langle \mathbf{y}\rangle_0 $ into a $d_1\times d_2$-dimensional matrix  $\langle \mathbf{Y}\rangle_0 $ where each row constitutes a copy of $\langle \mathbf{y}\rangle_0 $. Then,
    $P_0$ send the encryptions $c_1\leftarrow \mathtt{Enc}(pk, \langle \mathbf{X}\rangle_0)$  and $c_2\leftarrow \mathtt{Enc}(pk, \langle \mathbf{Y}\rangle_0)$ to $P_1$ along with  zero-knowledge (ZK) proofs of plaintext knowledge of the two ciphertexts \footnotemark[4].
    \end{packeditemize}
 \begin{packeditemize}
    \item[3.] $P_1$  also converts $\langle \mathbf{y}\rangle_1 $ into a $d_1\times d_2$-dimensional matrix  $\langle \mathbf{Y}\rangle_1$ where each row constitutes a copy of $\langle \mathbf{y}\rangle_1$. Then it
     samples $(\langle \alpha \mathbf{X}\rangle_1, \langle \alpha \mathbf{Y}\rangle_1, \langle \alpha \mathbf{Z}\rangle_1, \langle \mathbf{Z} \rangle_1) $ from $\mathbb{F}_{p}^{4\times ({d_1\times d_2})}$.  $P_1$ sends $c_3=\mathtt{Enc}_{pk}( \alpha(\langle \mathbf{X}\rangle_1+\langle \mathbf{X}\rangle_0)-\langle \alpha \mathbf{X}\rangle_1)$, $c_4=\mathtt{Enc}_{pk}( \alpha(\langle \mathbf{Y}\rangle_1+\langle \mathbf{Y}\rangle_0)-\langle \alpha \mathbf{Y}\rangle_1)$, $c_5=\mathtt{Enc}_{pk}(\alpha(\mathbf{X}\odot \mathbf{Y})-\langle \alpha \mathbf{Z}\rangle_1)$, and $c_6=\mathtt{Enc}_{pk}((\mathbf{X}\odot \mathbf{Y})-\langle \mathbf{Z}\rangle_1)$ to $P_0$.
    \item[4.] $P_0$ decrypts $c_3$, $c_4$,  $c_5$ and   $c_6$ to obtain $(\langle \alpha \mathbf{X}\rangle_0, \langle \alpha \mathbf{Y}\rangle_0, \langle \alpha \mathbf{Z}\rangle_0, \langle \mathbf{Z} \rangle_0)$, respectively. Then,  it sums the elements of each row of the matrices $\langle \alpha \mathbf{Y}\rangle_0$\footnotemark[5], $\langle \alpha \mathbf{Z}\rangle_0$ and $\langle \mathbf{Z} \rangle_0$ to form the vectors $\langle \alpha \mathbf{y}\rangle_0$, $\langle \alpha \mathbf{z}\rangle_0$ and $\langle \mathbf{z} \rangle_0$. $P_1$ does the same for $(\langle \alpha \mathbf{Y}\rangle_1, \langle \alpha \mathbf{Z}\rangle_1, \langle \mathbf{Z} \rangle_1)$ to obtain $\langle \alpha \mathbf{y}\rangle_1$, $\langle \alpha \mathbf{z}\rangle_1$ and $\langle \mathbf{z} \rangle_1$.
    \item[5.] $P_b$ outputs $\{[\![\mathbf{X}]\!]_b, [\![\mathbf{y}]\!]_b, [\![\mathbf{z}]\!]_b\}_{b\in \{0, 1\}}$, where $\mathbf{X}\ast \mathbf{y}=\mathbf{z}$.
    \end{packeditemize}\\
\hline
\end{tabular}
\caption{Algorithm $\pi_{Mtriple}$ for generating authenticated matrix-vector multiplication triple}
\label{Algorithm of generating matrix-vector multiplication triple}
\end{table}
\footnotetext[4]{A ZK proof of knowledge for  ciphertexts is used to state that $c_{1}$ and $c_{2}$ are  valid ciphertexts generated from the given FHE cryptosystem. Readers can refer to \cite{keller2018overdrive,chen2020maliciously} for more details.}
\footnotetext[5]{Note that for $\langle \alpha \mathbf{Y}\rangle_0$, we only take the all elements in the first row as $\langle \alpha \mathbf{y}\rangle_0$ by default.The operation for $\langle \alpha \mathbf{Y}\rangle_1$ is the same as above.}
\setcounter{figure}{3}
\setcounter{table}{3}
\begin{figure}[htb]
\centering
\includegraphics[width=0.5\textwidth]{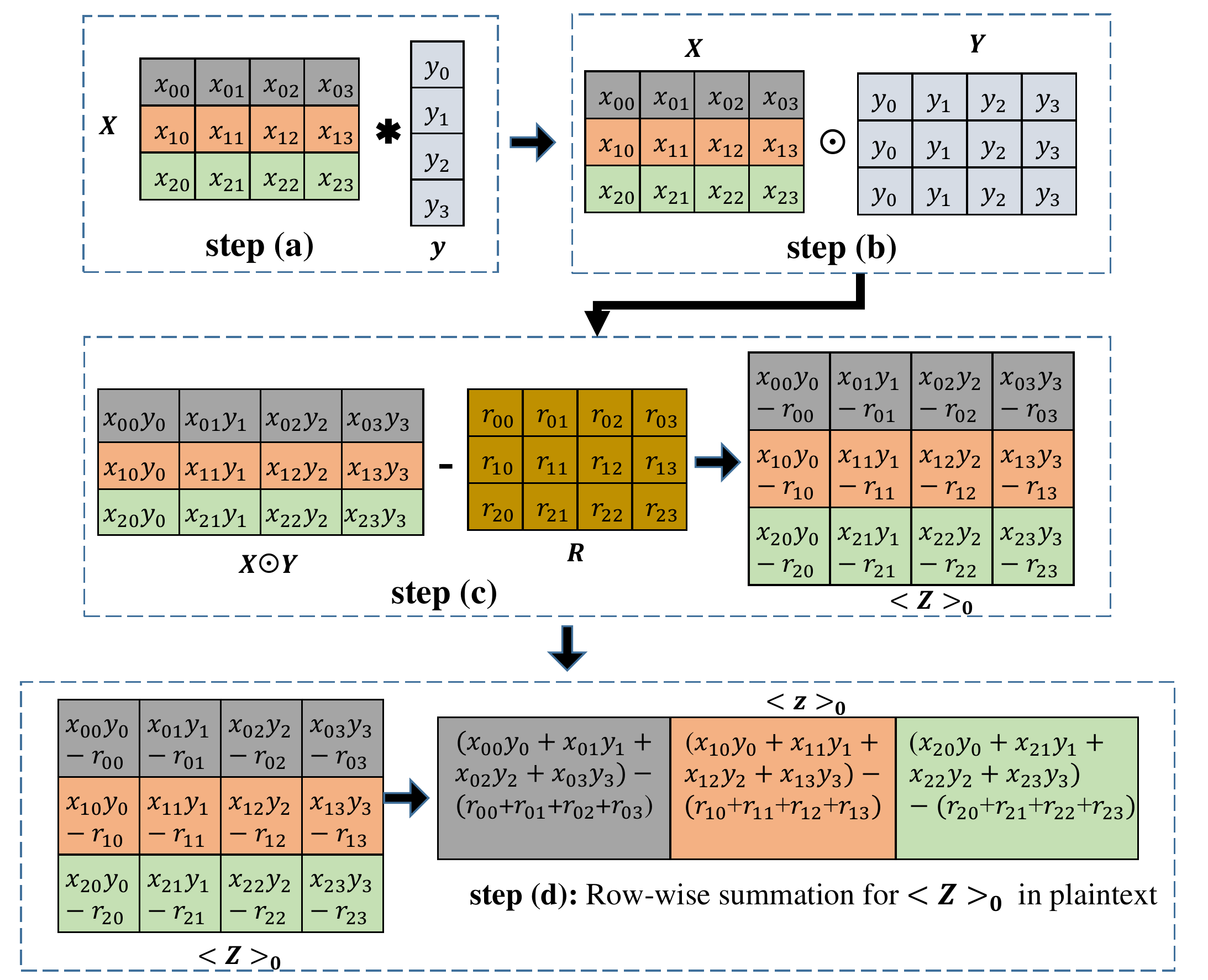}
\caption{Matrix-vector multiplication}
\label{Fig:Matrix-vector multiplication}
\end{figure}

Figure~\ref{Fig:Matrix-vector multiplication} provides an example of the multiplication of a $3\times4$-dimensional matrix $\mathbf{X}$ and a $4$-dimensional vector $\mathbf{y}$ to facilitate understanding. To compute the additive sharing of $\mathbf{z}= \mathbf{X} \ast \mathbf{y}$ (step(a) in Figure~\ref{Fig:Matrix-vector multiplication}), $\mathbf{y}$ is first transformed into a matrix $\mathbf{X}$ by copying, where each row of $\mathbf{Y}$ contains a copy of $\mathbf{y}$. $P_1$ then performs  element-wise multiplications (step(b) in Figure~\ref{Fig:Matrix-vector multiplication}) for $\mathbf{X}$ and $\mathbf{Y}$ under the ciphertext. To construct the additive sharing of $\mathbf{z}= \mathbf{X} \ast \mathbf{y}$, $P_1$ uniformly chooses a random matrix $\mathbf{R}\in \mathbb{F}_p^{3\times 4}$ and computes $\langle \mathbf{Z} \rangle_0=\mathbf{X} \odot \mathbf{Y}-\mathbf{R}$ (step(c) in Figure~\ref{Fig:Matrix-vector multiplication}). $P_1$ sends the ciphertext result to $P_0$. $P_0$ decrypts it and sums each row in plaintext to obtain vector $\langle \mathbf{z} \rangle_0$ (step(d) in Figure~\ref{Fig:Matrix-vector multiplication}), similarly, $P_1$ performs the same operation on matrix $\mathbf{R}$ to obtain $\langle \mathbf{z} \rangle_1$.

\textit{Remark 4.1}: Compared to generating multiplication triples  for single multiplication \cite{keller2018overdrive,keller2020mp}, our constructed matrix-multiplication triples enable the communication overhead to be independent of the number of  multiplications, only related to the size of the input. This reduces the amount of data that needs to be exchanged between $P_0$ and $P_1$. In addition, we move the majority of the computation to be executed by the  semi-honest party, which avoids the need for distributed decryption and frequent zero-knowledge proofs in malicious adversary settings. Compared to existing parallel homomorphic computation methods \cite{jiang2018secure,halevi2014algorithms}, our matrix-vector multiplication does not involve any rotation operation, which is very computationally expensive compared to other homomorphic operations. This stems from our observation of the inner tie between HE and secret sharing. Since the final ciphertext result needs to be secretly shared to $P_0$ and $P_1$, we can first perform the secret sharing under the ciphertext (see step(c) and step(d) in Figure~\ref{Fig:Matrix-vector multiplication}), and then perform all rotation and summation operations under the plaintext.

\noindent\textbf{Security}. Our protocol for generating matrix-vector multiplication triples, $\pi_{Mtriple}$, is secure against the malicious model holder $P_0$  and the semi-honest client $P_1$. We provide the following theorem and prove it in Appendix~\ref{Proof of Theorem 1}.
\begin{theorem}
\label{theorem1}
Let the fully homomorphic encryption used in $\pi_{Mtriple}$  have the properties defined in Section~\ref{Fully Homomorphic Encryption}. $\pi_{Mtriple}$ is secure against the malicious model holder $P_0$  and the semi-honest client $P_1$.
\end{theorem}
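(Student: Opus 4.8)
The plan is to prove security via the standard simulation paradigm, constructing a separate ideal-world simulator for each corrupted party and arguing indistinguishability from the real execution. Since the threat model is asymmetric, the two cases are handled quite differently: for the semi-honest client $P_1$ the argument rests on semantic security of the FHE scheme, while for the malicious model holder $P_0$ the argument rests on the ZK proofs of plaintext knowledge together with the circuit privacy / functional privacy of the FHE scheme.

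\textbf{Case 1: semi-honest $P_1$.} Here I would build a simulator $\mathcal{S}_1$ that, given only $P_1$'s input $(\langle\mathbf{X}\rangle_1,\langle\mathbf{y}\rangle_1,\alpha)$ and its prescribed output $\{[\![\mathbf{X}]\!]_1,[\![\mathbf{y}]\!]_1,[\![\mathbf{z}]\!]_1\}$, produces a view indistinguishable from the real one. The only message $P_1$ receives is the pair of ciphertexts $c_1,c_2$ (step 2) together with the ZK proofs. The simulator simply sends $\mathtt{Enc}(pk,\mathbf{0})$ in place of $c_1,c_2$ and runs the ZK simulator for the proofs; indistinguishability follows from semantic security of the FHE scheme by a routine hybrid argument. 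All other values in $P_1$'s view (the $c_3,\dots,c_6$ it sends, the random pads $\langle\alpha\mathbf{X}\rangle_1$, etc.) are either locally generated from its own randomness or are functions of data it already holds, so they are reproduced exactly. One must also check that the prescribed output is consistent with these simulated messages — i.e. that no correlation between the (real) $\mathbf{X},\mathbf{y}$ and $P_1$'s output share is observable — which holds because $P_1$'s output shares are freshly uniform by its own sampling in step 3.

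\textbf{Case 2: malicious $P_0$.} This is the main obstacle and the heart of the proof. The simulator $\mathcal{S}_0$ plays the role of $P_1$ and of the ideal functionality simultaneously. It runs the one-time key-generation so it knows $(pk,sk)$ just as $P_0$ does in the real protocol — or, more precisely, it extracts whatever $P_0$ commits to. From the ZK proofs of plaintext knowledge attached to $c_1,c_2$, $\mathcal{S}_0$ uses the knowledge extractor to recover the plaintexts $\langle\mathbf{X}\rangle_0$ and $\langle\mathbf{Y}\rangle_0$ that $P_0$ actually encrypted (aborting if the proofs fail, exactly as honest $P_1$ would). With these extracted values and a sampled MAC key $\alpha$, $\mathcal{S}_0$ can compute what the honest $P_1$ would send back in step 3; but since it must not know the true $\mathbf{X}$ (it only knows $P_0$'s shares), it instead sends encryptions of uniformly random field elements in place of $c_3,\dots,c_6$, relying on circuit/functional privacy of the FHE to argue that these are indistinguishable from the honestly-evaluated ciphertexts $P_0$ decrypts. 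Finally $\mathcal{S}_0$ feeds the extracted $\langle\mathbf{X}\rangle_0,\langle\mathbf{y}\rangle_0$ to the ideal triple functionality to fix $P_0$'s output. The subtle points I expect to spend the most effort on are: (i) arguing that a cheating $P_0$ cannot cause the parties to output an inconsistent triple $\mathbf{X}\ast\mathbf{y}\neq\mathbf{z}$ except with negligible probability — this follows because $\mathbf{z}$ is defined by $P_1$'s honest computation on the extracted plaintexts, so consistency is guaranteed whenever extraction succeeds; and (ii) handling the fact that $P_0$ chooses its ciphertexts adaptively and may submit malformed ciphertexts — here the soundness of the ZK proof of plaintext knowledge is exactly what rules this out, forcing every accepted ciphertext to be a well-formed encryption of an extractable message.

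\textbf{Assembling the argument.} After describing both simulators I would state the chain of hybrids for each case — real execution, then replace ZK proofs by simulated ones, then replace honestly-generated ciphertexts by encryptions of zero (Case 1) or of random elements (Case 2), invoking semantic security and functional privacy respectively at each step — and conclude that the real and ideal views are computationally indistinguishable, which is precisely the claim of Theorem~\ref{theorem1}. I would also remark that the $\lfloor\log p\rfloor$ bits of statistical security from the authenticated-sharing mechanism (Section~\ref{Secret Sharing}) are not needed \emph{within} this sub-protocol, since $\pi_{Mtriple}$ only produces triples; MAC-based detection of $P_0$'s misbehavior is deferred to the online consistency check, so the security guarantee here is simply that $\pi_{Mtriple}$ securely realizes the authenticated-triple-generation functionality.
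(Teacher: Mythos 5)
Your overall strategy coincides with the paper's: a simulation-based argument with one simulator per corruption case, semantic security of the FHE plus simulated ZK proofs for the semi-honest client, and ZK extraction plus functional privacy of the FHE for the malicious model holder. Your Case 1 matches the paper's $\mathtt{Sim_c}$ essentially verbatim (replace $c_1,c_2$ by $\mathtt{Enc}_{pk}(\mathbf{0})$, simulate the proofs, hybrid over semantic security).

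There is, however, one concrete misstep in your Case 2 simulator. You have $\mathcal{S}_0$ send encryptions of \emph{independent} uniformly random field elements as $c_3,\dots,c_6$ and only afterwards query the ideal functionality to ``fix $P_0$'s output.'' But $P_0$ holds $sk$ and obtains its output shares precisely by decrypting $c_3,\dots,c_6$, while the honest $P_1$'s output shares are delivered by the functionality; the security definition in Appendix~A requires indistinguishability of the \emph{joint} distribution of $P_1$'s output and the adversary's view. In the real execution $\langle \mathbf{z}\rangle_0+\langle \mathbf{z}\rangle_1=\mathbf{X}\ast\mathbf{y}$ holds by construction, whereas in your simulation the decrypted values are independent of the shares the functionality hands to $P_1$, so this check fails with overwhelming probability and the environment distinguishes. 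Functional privacy cannot repair this: it hides which function was homomorphically evaluated, not the underlying plaintext, and here the adversary can decrypt. The fix is exactly what the paper's $\mathtt{Sim_m}$ does: first query $\mathcal{F}_{Mtriple}$ on the extracted $\langle \mathbf{X}\rangle_0,\langle \mathbf{Y}\rangle_0$ together with sampled $\langle \mathbf{X}\rangle_1,\langle \mathbf{Y}\rangle_1$ to obtain $P_0$'s prescribed output $(\langle \alpha\mathbf{X}\rangle_0,\langle \alpha\mathbf{Y}\rangle_0,\langle \alpha\mathbf{Z}\rangle_0,\langle \mathbf{Z}\rangle_0)$, and only then produce $c_3,\dots,c_6$ as encryptions of exactly those values, invoking functional privacy solely to argue that these are indistinguishable from the homomorphically evaluated ciphertexts of the real execution. (Your ordering could be salvaged by a corruption-aware functionality that lets the simulator dictate the corrupted party's shares, but $\mathcal{F}_{Mtriple}$ is not defined that way here.) The remainder of your argument, including the remark that MAC-based misbehavior detection is deferred to the online consistency check, is consistent with the paper.
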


\subsubsection{\quad \; Generating convolution triple}
\label{Generating convolution triple}

We describe the technical details of generating authenticated  triples for convolution. Briefly, for a given convolution operation, we first convert it to equivalent matrix multiplications, and then generate triples for the matrix multiplications. We start by reviewing the definition of convolution and how to translate it into the equivalent matrix multiplication. Then, we explain how to generate authenticated triples.

\noindent\textcircled{1}\textbf{ Convolution}. Assuming an input tensor of size $u_w\times u_h$ with $c_i$ channels, denoted as $\mathbf{X}_{ijk}$, where $1\leq i\leq u_w$ and $1\leq j\leq u_h$ are spatial coordinates, and  $1\leq k\leq c_i$ is the channel. Let $c_o$  kernels with  a size of $(2l+1)\times (2l+1)\times c_i$ denote as tensor $\mathbf{Y}_{\Delta_i, \Delta_j, k, k'}$, where $-l\leq \Delta_i, \Delta_j \leq l$ are shifts of the spatial coordinates, $1\leq k\leq c_i$ and $1\leq k'\leq c_o$ are the channels and kernels, respectively.  The convolution between $\mathbf{X}$ and $\mathbf{Y}$  (i.e., $\mathbf{Z}=\mathtt{Conv}(\mathbf{X}, \mathbf{Y})$) is defined as below:

\begin{small}
 \begin{equation}
 \label{eq7}
\begin{split}
\mathbf{Z}_{ijk'}=\sum_{\Delta_i, \Delta_j,k} \mathbf{X}_{i+\Delta_i, j+\Delta_j, k}\cdot \mathbf{Y}_{\Delta_i, \Delta_j, k'}
\end{split}
\end{equation}
\end{small}
The resulting tensor $\mathbf{Z}_{ijk'}$ has $u_w'\times u_h'$ spatial coordinates and $c_0$ channels. We have $u_w'=(u_w-(2l+1)+2p)/s+1$  and $u_h'=(u_h-(2l+1)+2p)/s+1$, where $p$  represents the number of turns to zero-pad the input, and $s$ represents the stride size of the kernel movement \cite{lecun2015deep}. Note that  the entries of  $\mathbf{X}$ to be zero if $i+\Delta_i$ or $j+\Delta_j$ are outside of the ranges $[1; u_w']$ and $[1;u_h']$, respectively.

\noindent\textcircled{2}\textbf{ Conversion between convolution and matrix multiplication}. Based on Eqn.(\ref{eq7}), we can easily convert convolution into an equivalent matrix multiplication. Specifically, we construct a matrix $\mathbf{X'}$ with dimension $u_w' u_h'\times (2l+1)^2\cdot c_i$, where $\mathbf{X'}_{(i, j)(\Delta_i, \Delta_j,k)}=\mathbf{X}_{i+\Delta_i, j+\Delta_j, k}$. Similarly, we construct a matrix $\mathbf{Y'}$ of dimension $(2l+1)^2\cdot c_i \times c_o$ such that $\mathbf{Y'}(\Delta_i, \Delta_j, k)k'=\mathbf{Y}_{\Delta_i, \Delta_j, k'}$. Then, the original convolution operation is transformed into $\mathbf{Z'}= \mathbf{X'}\ast \mathbf{Y'}$, where $\mathbf{Z'}_{(ij)k'}$ $=\mathbf{Z}_{ijk'}$. In Appendix~\ref{ Conversion between convolution and matrix multiplication}, we provide a detailed example to implement the above transformation.
\begin{table}[htb]
\centering
\small
\begin{tabular}{|p{8.0cm}|}
\hline \\
 \textbf{Input:} $\{P_b\}_{b\in \{0, 1\}}$ holds $\langle \mathbf{X}\rangle_b$  uniformly chosen from ${\mathbb{F}_p}^{u_w\times u_h\times c_i}$, and $ \langle \mathbf{Y}\rangle_b$ uniformly chosen from ${\mathbb{F}_p}^{(2l+1)\times(2l+1)\times c_i\times c_o}$.   In addition, $p_1$ holds a MAC key $\alpha$ uniformly chosen from $\mathbb{F}_p$\\
 \textbf{Output:} $P_b$ obtains $\{[\![\mathbf{X}]\!]_b, [\![\mathbf{Y}]\!]_b, [\![\mathbf{Z}]\!]_b\}_{b\in \{0, 1\}}$, where $\mathbf{Z}=\mathtt{Conv}(\mathbf{X}, \mathbf{Y})$.\\
 \textbf{Procedure}:
 \begin{packeditemize}
    \item[1.]  $P_0$ and  $P_1$ participate in a secure two-party computation such that $P_0$ obtains an FHE public-secret key pair ($pk$, $sk$) while $P_1$ obtains the public key $pk$. This process is performed only once.
    \item[2.] $P_0$ first converts $\langle \mathbf{X}\rangle_0 $  and $\langle \mathbf{Y}\rangle_0 $ into  equivalent matrixes $\langle \mathbf{X'}\rangle_0 $ and and $\langle \mathbf{Y'}\rangle_0 $, where $\langle \mathbf{X'}\rangle_0 \in {\mathbb{F}_p}^{u_w' u_h'\times (2l+1)^2\cdot c_i}$ while $\langle \mathbf{Y'}\rangle_0 \in {\mathbb{F}_p}^{(2l+1)^2\cdot c_i \times c_o}$.  Then,
    $P_0$ sends the encryptions $c_1\leftarrow \mathtt{Enc}(pk, \langle \mathbf{X'}\rangle_0)$  and $c_2\leftarrow \mathtt{Enc}(pk, \langle \mathbf{Y'}\rangle_0)$ to $P_1$ along with zero-knowledge (ZK) proofs of plaintext knowledge of the two ciphertexts.
    \end{packeditemize}
 \begin{packeditemize}
    \item[3.] $P_1$  also converts $\langle \mathbf{X}\rangle_1 $  and $\langle \mathbf{Y}\rangle_1 $ into  equivalent matrixes $\langle \mathbf{X'}\rangle_1 $ and and $\langle \mathbf{Y'}\rangle_1 $.  Then it
     samples $(\langle \alpha \mathbf{X'}\rangle_1, \langle \alpha \mathbf{Y'}\rangle_1, \langle \alpha \mathbf{Z'}\rangle_1, \langle \mathbf{Z'} \rangle_1) $, and  computes  $c_3=\mathtt{Enc}_{pk}( \alpha(\langle \mathbf{X'}\rangle_1+\langle \mathbf{X'}\rangle_0)-\langle \alpha \mathbf{X'}\rangle_1)$, $c_4=\mathtt{Enc}_{pk}( \alpha(\langle \mathbf{Y'}\rangle_1+\langle \mathbf{Y'}\rangle_0)-\langle \alpha \mathbf{Y}\rangle_1)$, $c_5= \alpha \boxtimes(\mathbf{c}_\mathbf{X'}\circledast \mathbf{c}_\mathbf{Y'})-\mathtt{Enc}_{pk}(\langle \alpha \mathbf{Z'}\rangle_1)$, and $c_6=(\mathbf{c}_\mathbf{X'}\circledast \mathbf{c}_\mathbf{Y'})-\mathtt{Enc}_{pk}(\langle \mathbf{Z}\rangle_1)$. $P_1$ sends $c_3$, $c_4$,  $c_5$ and   $c_6$ to $P_0$.
    \item[4.] $P_0$ decrypts $c_3$, $c_4$,  $c_5$ and   $c_6$ to obtain $(\langle \alpha \mathbf{X'}\rangle_0, \langle \alpha \mathbf{Y'}\rangle_0, \langle \alpha \mathbf{Z'}\rangle_0, \langle \mathbf{Z'} \rangle_0)$, respectively. Then,  Both $P_0$ and $P_1$ converts these matrices back into tensors to get $(\langle \alpha \mathbf{X}\rangle_b, \langle \alpha \mathbf{Y}\rangle_b, \langle \alpha \mathbf{Z}\rangle_b, \langle \mathbf{Z} \rangle_b)$ for $b=\{0, 1\}$.
    \item[5.] $P_b$ outputs $\{[\![\mathbf{X}]\!]_b, [\![\mathbf{Y}]\!]_b, [\![\mathbf{Z}]\!]_b\}_{b\in \{0, 1\}}$, where $\mathbf{Z}=\mathtt{Conv}(\mathbf{X}, \mathbf{Y})$.
    \end{packeditemize}\\
\hline
\end{tabular}
\caption{Algorithm $\pi_{Ctriple}$ for generating authenticated convolution triple}
\label{Algorithm of generating convolution triple}
\end{table}

\noindent\textcircled{3}\textbf{Generating convolution triple}. Figure~\ref{Algorithm of generating convolution triple} depicts the interaction between the model holder $P_0$ and the client $P_1$ to generate triples of convolution. Succinctly, $P_0$ first uniformly selects $\langle \mathbf{X'}\rangle_0$ and $ \langle \mathbf{Y'}\rangle_0 $ and sends their encryption to $P_1$, along with  zero-knowledge proofs about these ciphertexts (step 2 in Figure~\ref{Algorithm of generating convolution triple}). $P_1$ recovers $\mathbf{X'}$ and $\mathbf{Y'}$ under the ciphertext and then computes $(\langle \alpha \mathbf{X'}\rangle_0, \langle \alpha \mathbf{Y'}\rangle_0, \langle \alpha \mathbf{Z'}\rangle_0, \langle \mathbf{Z'} \rangle_0)$ (step 3 in Figure~\ref{Algorithm of generating convolution triple}). Then it returns the corresponding ciphertexts to $P_0$. $P_0$ decrypts these ciphertexts and computes $\langle \alpha \mathbf{X}\rangle_0, \langle \alpha \mathbf{Y}\rangle_0$, $\langle \alpha \mathbf{Z}\rangle_0$ and $\langle \mathbf{Z} \rangle_0$ (step 4 in Figure~\ref{Algorithm of generating convolution triple}). Finally, $P_b$ obtains $\{[\![\mathbf{X}]\!]_b, [\![\mathbf{Y}]\!]_b, [\![\mathbf{Z}]\!]_b\}_{b\in \{0, 1\}}$, where $\mathbf{Z}=\mathtt{Conv}(\mathbf{X}, \mathbf{Y})$.

\textit{Remark 4.2}: We utilize the method in \cite{jiang2018secure} to perform the homomorphic multiplication operations involved in generating convolution triples in parallel. Given the multiplication of two $d\times d$-dimensional matrices,
it reduces the computational complexity from $O(d^2)$ to $O(d)$, compared with the existing  method \cite{halevi2014algorithms}. Besides, \cite{jiang2018secure} requires only one ciphertext to represent a single matrix whereas existing work \cite{halevi2014algorithms} requires $d$ ciphertexts (assuming the number of plaintext slots $n$ in FHE  is greater than $d^2$). In addition,  compared to generating multiplication triples  for single multiplication \cite{keller2018overdrive,keller2020mp}, the communication overhead of our method is independent of the number of  multiplications, only related to the size of the input, i.e., reduce the communication cost  from cubic to quadratic (both
offline and online phases).

\textit{Remark 4.3}: We further exploit the properties of semi-honest clients to improve the performance of generating convolution triples. Specifically, for the multiplication of matrices $\mathbf{X}$ and $\mathbf{Y}$, the permutations $\sigma(\mathbf{X})$ and $\varphi(\mathbf{Y})$ can be done in plaintext beforehand, which reduces the rotation in half compared to the original method (see Section~3.2 in \cite{jiang2018secure} for comparison). Moreover, we move the majority of the computation to be executed by the  semi-honest party, which avoids the need for distributed decryption and frequent zero-knowledge proofs in malicious adversary settings.

\noindent\textbf{Security}. Our protocol for generating authenticated convolution triples, $\pi_{Ctriple}$, is secure against the malicious model holder $P_0$  and the semi-honest client $P_1$. We provide the following theorem.
\begin{theorem}
\label{theorem2}
Let the fully homomorphic encryption used in $\pi_{Ctriple}$  have the properties defined in Section~\ref{Fully Homomorphic Encryption}. $\pi_{Ctriple}$ is secure against the malicious model holder $P_0$  and the semi-honest client $P_1$.
\end{theorem}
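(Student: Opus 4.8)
\noindent\textbf{Proof proposal for Theorem~\ref{theorem2}.}
The plan is to reuse, almost verbatim, the simulation-based argument behind Theorem~\ref{theorem1} (Appendix~\ref{Proof of Theorem 1}): the only structural change is that the matrix--vector product $\mathbf{X}\ast\mathbf{y}$ is replaced by $\mathtt{Conv}(\mathbf{X},\mathbf{Y})$, which by the conversion of Section~\ref{Generating convolution triple} equals the plain matrix product $\mathbf{X}'\ast\mathbf{Y}'$ and is computed homomorphically through the packed operator $\circledast$ of \cite{jiang2018secure}. First I would fix the ideal functionality $\mathcal{F}_{Ctriple}$ that, on request, hands each $P_b$ a freshly sampled authenticated convolution triple $\{[\![\mathbf{X}]\!]_b,[\![\mathbf{Y}]\!]_b,[\![\mathbf{Z}]\!]_b\}$ with $\mathbf{Z}=\mathtt{Conv}(\mathbf{X},\mathbf{Y})$, allowing a malicious $P_0$ to force an abort. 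Then I would build a PPT simulator for each corruption pattern and show its output is computationally indistinguishable from the real execution.

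For a \emph{malicious $P_0$}, the simulator $\mathcal{S}_0$ plays the key-generation step of Figure~\ref{Algorithm of generating convolution triple} so that it learns $sk$; on receiving $c_1,c_2$ with their ZK proofs it checks the proofs (aborting exactly when the honest $P_1$ would), decrypts to recover $P_0$'s effective shares $\langle\mathbf{X}'\rangle_0,\langle\mathbf{Y}'\rangle_0$, reshapes them to tensors $\langle\mathbf{X}\rangle_0,\langle\mathbf{Y}\rangle_0$, and submits these as $P_0$'s input to $\mathcal{F}_{Ctriple}$; from the returned output shares it then sends $c_3,\dots,c_6$ as fresh encryptions of the unique plaintexts consistent with those shares. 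Indistinguishability would rest on three facts: (i) soundness of the ZK proof of plaintext knowledge forces $c_1,c_2$ to be well-formed, so the honest-$P_1$ evaluation $\alpha\boxtimes(\mathbf{c}_{\mathbf{X}'}\circledast\mathbf{c}_{\mathbf{Y}'})$ really encrypts $\alpha\mathbf{Z}'$; (ii) in the real run the shares $(\langle\alpha\mathbf{X}'\rangle_1,\langle\alpha\mathbf{Y}'\rangle_1,\langle\alpha\mathbf{Z}'\rangle_1,\langle\mathbf{Z}'\rangle_1)$ are uniform and independent, so the plaintexts under $c_3,\dots,c_6$ are uniform subject only to reconstructing the correct triple --- exactly what $\mathcal{S}_0$ reproduces; (iii) functional privacy of the FHE scheme makes the homomorphically-evaluated $c_5,c_6$ distributed like fresh encryptions, hiding the evaluated circuit and its noise. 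I would also observe that $P_0$'s freedom to choose $\langle\mathbf{X}\rangle_0,\langle\mathbf{Y}\rangle_0$ adversarially and to mis-decrypt in step~4 only affects its own output and is absorbed by letting $\mathcal{S}_0$ pick $P_0$'s ideal input, while $\mathbf{X}=\langle\mathbf{X}\rangle_0+\langle\mathbf{X}\rangle_1$ (and likewise $\mathbf{Y}$) stays uniform because $P_1$'s shares are.

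For a \emph{semi-honest $P_1$}, the incoming view is just $pk$, the ciphertexts $c_1,c_2$, and the ZK proofs --- $P_1$ receives nothing from $P_0$ after step~2 --- so the simulator $\mathcal{S}_1$ generates $(pk,sk)$ honestly, sets $c_1,c_2$ to encryptions of $0$, invokes the ZK simulator for the proofs, and runs $P_1$'s honest code on that transcript; since $P_1$'s output is computed only from its own sampled shares (it never touches $\mathbf{X},\mathbf{Y}$ in the clear) it agrees with the functionality output, and a two-step hybrid over $c_1$ then $c_2$ reduces indistinguishability to the semantic security of the FHE scheme and the zero-knowledge property.

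The hard part will be the malicious-$P_0$ case: I must verify that the packed evaluation $\circledast$ of \cite{jiang2018secure} --- with the plaintext pre-permutations of Remark~4.3 --- still enjoys functional privacy once composed with the additive re-randomizations by $\langle\alpha\mathbf{Z}'\rangle_1$ and $\langle\mathbf{Z}'\rangle_1$, and that soundness of the batched ZK proof of plaintext knowledge genuinely rules out every malformed-ciphertext attack by which $P_0$ could extract $P_1$'s MAC key $\alpha$ or bias the generated triple; the remainder (tensor$\leftrightarrow$matrix reshaping, row summations) is syntactic bookkeeping identical to Theorem~\ref{theorem1}.
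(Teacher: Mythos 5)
Your proposal is correct and takes essentially the same route as the paper, which for Theorem~\ref{theorem2} simply states that the argument is identical to that of Theorem~\ref{theorem1} and omits it: you reproduce the two-simulator hybrid argument from Appendix~\ref{Proof of Theorem 1} (encryptions of zero plus simulated ZK proofs against the semi-honest client; extraction via the ZK proofs and functional privacy of the FHE scheme against the malicious model holder), adapted syntactically to the convolution-to-matrix-multiplication conversion and the packed operator $\circledast$. The additional concerns you flag (functional privacy of the packed evaluation after additive re-randomization, soundness of the ZK proof of plaintext knowledge) are reasonable points of care but are already assumed as properties of the primitives in Section~\ref{Fully Homomorphic Encryption}, so they do not constitute a gap relative to the paper's argument.
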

\begin{proof}
The proof logic of this theorem is very similar to \textbf{Theorem}~\ref{theorem1},  we omit it for brevity.
\end{proof}

\subsubsection{\quad\; Preprocessing for the nonlinear layer}
\label{Preprocessing for the nonlinear layer}
This process is performed by the client to generate garbled circuits of nonlinear functions for the model holder. Note that we do not generate  $\mathtt{GC}$ for ReLu but for the nonlinear part of ReLU, i.e. $sign(\mathbf{v})$ given an arbitrary input $\mathbf{v}$. We first define a truncation function $\mathbf{Trun}_h: \{0, 1\}^\lambda\rightarrow \{0, 1\}^h$, which outputs the last $h$ bits of the input, where $\lambda$ satisfies $\lambda\geq 2\kappa$. Then, the client is required to generate random ciphertexts and send them to the model holder as follows.
\begin{packeditemize}
\item Given the security parameter $\lambda$, and the boolean circuit $booln^C$ denoted  the nonlinear part of ReLU,  $P_1$ computes $\mathtt{Garble}(1^\lambda, booln^C)\rightarrow (\mathtt{GC}, \{ \{\mathtt{lab}_{i,j}^{in}\}_{i\in[2\kappa]},$ $\{\mathtt{lab}_{i,j}^{out}\}_{i\in[2\kappa]}\}_{j\in\{0,1\}})$, where $\mathtt{GC}$ is the  garbled circuit of $booln^C$, $\{ \{\mathtt{lab}_{i,j}^{in}\}_{i\in[2\kappa]},$ $\{\mathtt{lab}_{i,j}^{out}\}_{i\in[2\kappa]}\}_{j\in\{0,1\}}$ represent all possible garbled input and output labels, respectively. $P_1$ sends $\mathtt{GC}$ to the model holder $P_0$.
\item $P_1$ uniformly selects $\eta_{i, 1}$, $\gamma_{i, 1}$ and $\iota_{i, 1}$ from $\mathbb{F}_p$ for every $i\in[\kappa]$. Then, $P_1$ sets   $(\eta_{i, 0}, \gamma_{i, 0}, \iota_{i, 0})=(1+\eta_{i, 1}, \alpha+\gamma_{i, 1}, \alpha+\iota_{i, 1})$.
\item $P_1$ parses $\{\mathtt{lab}_{i,j}^{out}\}$ as $\varsigma_{i, j}||\vartheta_{i,j}$ for  every $i\in [2\kappa]$ and $j\in \{0, 1\}$, where $\varsigma_{i, j}\in\{0, 1\}$ and $\vartheta_{i,j}\in\{0,1\}^{\lambda-1}$.
\item For every $i\in [\kappa]$ and $j\in \{0, 1\}$, $P_1$ sends $ct_{i, \varsigma_{i, j}}$ and $\hat{ct}_{i, \varsigma_{i+\kappa, j}}$ to $P_0$, where $ct_{i, \varsigma_{i, j}}=\iota_{i, j}\oplus \mathbf{Trun}_\kappa(\vartheta_{i,j})$ and $\hat{ct}_{i, \varsigma_{i+\kappa, j}}=(\eta_{i, j}||\gamma_{i, j})\oplus \mathbf{Trun}_{2\kappa}(\vartheta_{i+\kappa,j})$.
\end{packeditemize}

\noindent\textbf{Security}. We leave the explanation of above ciphertexts sent by $P_1$ to $P_0$ to the following sections. Here we briefly describe the security of preprocessing for nonlinear layers. It is easy to infer that the above preprocessing for the nonlinear layer is secure against the semi-honest client $P_1$ and the malicious model holder $P_0$. Specifically, for the client $P_1$, since the entire preprocessing process does not require the participation of the model holder, the client cannot obtain any private information about the model holder. Similarly, for the malicious model holder $P_0$, since the preprocessing is non-interactive and the generated ciphertext satisfies the $\mathtt{GC}$ security defined in Section~\ref{garbled circuits}, $P_0$ cannot obtain the plaintext corresponding to the ciphertext sent by the client.

\subsection{Online Phase}
\label{Online Phase}
In this section,  we describe the online phase of \Name. We first explain how \Name utilizes the triples generated in the offline phase to generate authenticated shares for matrix-vector multiplication and convolution. Then, we describe the technical details of the nonlinear operation.

\subsubsection{\quad\; Perform linear layers in the online phase}
\label{Perform linear layers in the online phase}
\begin{table}[htb]
\centering
\small
\begin{tabular}{|p{8.0cm}|}
\hline
\textbf{Preamble}: Consider a neural network (NN) consists of $m$  linear layers and $m-1$  nonlinear layers.  Let the specification of the linear layer is $\mathbf{L}_1, \mathbf{L}_1, \cdots \mathbf{L}_{m}$ and the non-linear layer is $f_1, \cdots, f_{m-1}$.\\
 \textbf{Input:}$P_0$ holds $\{ \mathbf{L}_i\}_{i\in[m]}$, \textit{i.e.}, weights for the $m$ linear layers. $P_1$  holds $\mathbf{x}_0$ as the input of NN, a random MAC key $\alpha$ from $\mathbb{F}_p$ to be used throughout the protocol execution.\\
 \textbf{Output:} $P_b$ obtains $[\![\mathbf{v}_i=\mathbf{L}_i \mathbf{x}_{i-1}]\!]_b$ for $i\in[m]$ and $b=\{0, 1\}$.\\
 \textbf{Procedure}:\\
\textbf{Input Sharing}:
\begin{packeditemize}
\item[1.] To share $P_0$'s input $\{ \mathbf{L}_i\}_{i\in[m]}$, all parties pick up a fresh authenticated element $[\![\mathbf{R}_i]\!]$ of the same dimension as $\mathbf{L}_i$.
\item[2.] $[\![\mathbf{R}_i]\!]$ is opened to $P_0$, and then it sends $\varpi_i=\mathbf{L}_i-\mathbf{R}_i$ to $P_1$.
\item[3.]  $P_b$ locally computes $[\![\mathbf{L}_i]\!]_b=[\![\mathbf{R}_i]\!]_b+\varpi_i$ for $b=\{0, 1\}$.
\item[4.] To share $P_1$'s input $\mathbf{v}_0$, $P_1$ randomly selects two masks $\xi$ and $\zeta$ of the same dimension as $\mathbf{v}_0$. Then, it sends  $[\![\mathbf{v}_0]\!]_0=(\mathbf{v}_0-\xi, \alpha\mathbf{v}_0-\zeta)$ to $P_0$. $P_1$ sets $[\![\mathbf{v}_0]\!]_1=(\xi, \zeta)$.
\item[5.] For each $i\in[m]$,
\begin{packeditemize}
\item  \textbf{Matrix-vector Multiplication}:  To  generate an authenticated triple of multiplications between matrix $\mathbf{A}$ and vector $\mathbf{b}$, where $\mathbf{A}$ and $\mathbf{b}$ are variables generated in the inference process. $P_0$ and $P_1$ take a fresh authenticated matrix-vector triple $\{[\![\mathbf{X}]\!]_b, [\![\mathbf{y}]\!]_b, [\![\mathbf{z}]\!]_b\}_{b\in \{0, 1\}}$ of dimensions consistent with $\mathbf{A}$ and $\mathbf{b}$.  Then, both party open $\mathbf{A}-\mathbf{X}$ and $\mathbf{b}-\mathbf{y}$. Finally, $P_b$ locally computes $[\![\mathbf{A}\ast \mathbf{b}]\!]_b$ based on Eqn.(\ref{eq6}).
\item \textbf{Convolution}: To  generate an authenticated triple of Convolution between tensors $\mathbf{A}$ and $\mathbf{B}$, where $\mathbf{A}$ and $\mathbf{B}$ are variables generated in the inference process. $P_0$ and $P_1$ take a fresh authenticated Convolution triple $\{[\![\mathbf{X}]\!]_b, [\![\mathbf{Y}]\!]_b, [\![\mathbf{Z}]\!]_b\}_{b\in \{0, 1\}}$ of dimensions consistent with $\mathbf{A}$ and $\mathbf{B}$.  Then, both party open $\mathbf{A}-\mathbf{X}$ and $\mathbf{b}-\mathbf{Y}$. Finally, $P_b$ locally computes $[\![ \mathtt{Conv}(\mathbf{A}, \mathbf{B})]\!]_b$ based on Eqn.(\ref{eq6}).
\end{packeditemize}
\item[6.] $P_b$ obtains $[\![\mathbf{v}_i=\mathbf{L}_i \mathbf{x}_{i-1}]\!]_b$ for $i\in[m]$ and $b=\{0, 1\}$.
\end{packeditemize}\\
\hline
\end{tabular}
\caption{Online linear layers protocol $\pi_{OLin}$}
\label{Online linear layers protocol}
\end{table}
Figure~\ref{Online linear layers protocol} depicts the interaction of the model holder and the client to perform linear layer operations in the online phase. Specifically, given the model holder's input $\{ \mathbf{L}_i\}_{i\in[m]}$  and the client's input $\mathbf{v}_0$, both parties first generate  authenticated shares  of their respective inputs (steps 1-4 in Figure~\ref{Online linear layers protocol}). Since the client is considered semi-honest, its input is shared more efficiently than the model holder, i.e. only local computations are required on randomly selected masks,  while the sharing process of model holder's input is consistent with the previous  malicious settings \cite{keller2018overdrive,keller2020mp,chen2020maliciously}. After that, the model holder and the client use the triples generated in the offline phase (i.e., matrix-vector multiplication triples and convolution triples) to generate authenticated sharing of linear layer computation results (step 5 in Figure~\ref{Online linear layers protocol}).

\noindent\textbf{Security}. Our protocol for  performing  linear layer operations in the online phase, $\pi_{OLin}$, is secure against the malicious model holder $P_0$  and the semi-honest client $P_1$. We provide the following theorem.
\begin{theorem}
\label{theorem2}
Let triples  used in $\pi_{OLin}$  are generated from $\pi_{Mtriple}$ and $\pi_{Ctriple}$. $\pi_{OLin}$ is secure against the malicious model holder $P_0$  and the semi-honest client $P_1$.
\end{theorem}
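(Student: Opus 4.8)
The plan is to give a simulation-based proof in the $(\pi_{Mtriple},\pi_{Ctriple})$-hybrid model. By Theorem~\ref{theorem1} and its convolution analogue, the two triple-generation subprotocols securely realize an ideal functionality $\mathcal{F}_{\mathsf{triple}}$ that hands each $P_b$ a fresh authenticated triple $\{[\![\mathbf{X}]\!]_b,[\![\mathbf{y}]\!]_b,[\![\mathbf{z}]\!]_b\}$ (resp. the convolution variant) with $\mathbf{X}\ast\mathbf{y}=\mathbf{z}$, where $\mathbf{X},\mathbf{y}$ are uniform and the corrupted party's shares may be adversarially chosen. By the composition theorem it suffices to analyze $\pi_{OLin}$ with each triple call replaced by an invocation of $\mathcal{F}_{\mathsf{triple}}$. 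I would first record the target functionality $\mathcal{F}_{OLin}$: on $\{\mathbf{L}_i\}_{i\in[m]}$ from $P_0$ and $\mathbf{x}_0$ from $P_1$ it returns authenticated shares of $\mathbf{v}_i=\mathbf{L}_i\mathbf{x}_{i-1}$, but it lets a malicious $P_0$ (i) choose its own output shares and (ii) inject an additive error $(\delta_i,\epsilon_i)$ into the value/MAC pair of each shared quantity. This is the usual SPDZ-style abort-or-additive-error relaxation, and the injected errors are precisely what the downstream consistency check is designed to catch.

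\textbf{Simulator against malicious $P_0$.} $\mathsf{Sim}_0$ emulates $P_1$ together with $\mathcal{F}_{\mathsf{triple}}$: it samples the MAC key $\alpha$ and, for every triple, hands $P_0$ its (adversarially requested or fresh uniform) shares while recording the underlying values. In the input-sharing phase $\mathsf{Sim}_0$ knows each $\mathbf{R}_i$ (it opened $[\![\mathbf{R}_i]\!]$ to $P_0$), so from the message $\varpi_i$ it extracts $P_0$'s effective weights $\mathbf{L}_i^{\ast}=\varpi_i+\mathbf{R}_i$ and forwards them to $\mathcal{F}_{OLin}$; in place of $[\![\mathbf{v}_0]\!]_0=(\mathbf{v}_0-\xi,\alpha\mathbf{v}_0-\zeta)$ it sends a pair of uniformly random vectors, identically distributed because $\xi,\zeta$ are fresh uniform masks. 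For each linear step, $\mathsf{Sim}_0$ plays $P_1$'s contribution to the openings of $\mathbf{A}-\mathbf{X}$ and $\mathbf{b}-\mathbf{y}$ (resp. $\mathbf{B}-\mathbf{Y}$) with uniform values, which is indistinguishable since $P_0$ lacks $P_1$'s share of the uniform triple elements, and it reads off whatever value/MAC shares $P_0$ broadcasts; any deviation from the prescribed computation is translated into $(\delta_i,\epsilon_i)$ and sent to $\mathcal{F}_{OLin}$, while the share $P_0$ locally derives from Eqn.(\ref{eq6}) is reported as $P_0$'s chosen output share. Because every message $P_0$ receives is a one-time pad by an independent uniform value, the real and simulated views are identically distributed, and correctness of $P_1$'s output up to the declared error follows from the Beaver identity and linearity of authenticated shares.

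\textbf{Simulator against semi-honest $P_1$.} No extraction is needed. $\mathsf{Sim}_1$ takes $P_1$'s input $\mathbf{x}_0$ and output shares $[\![\mathbf{v}_i]\!]_1$ from $\mathcal{F}_{OLin}$, gives $P_1$ uniform authenticated-triple shares as $\mathcal{F}_{\mathsf{triple}}$ would, sends a uniform $\varpi_i$ for each $i$ (distributed as $\mathbf{L}_i-\mathbf{R}_i$ since $\langle\mathbf{R}_i\rangle_0$ is uniform and hidden from $P_1$), and simulates the opened differences $\mathbf{A}-\mathbf{X}$, $\mathbf{b}-\mathbf{y}$ as uniform values; all remaining entries of $P_1$'s view are computed locally by $P_1$ and are automatically consistent once these messages and the output shares are fixed. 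Perfect hiding of additive and authenticated sharing (Section~\ref{Secret Sharing}) yields identical distributions.

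The main obstacle — and the only place statistical rather than perfect security enters — is making precise that an arbitrarily deviating $P_0$ inside $\pi_{OLin}$ can do no more than inject such additive errors: one must check, step by step through the input-sharing phase and through both the matrix–vector and the convolution Beaver evaluations, that each deviation either aborts or shifts the opened/broadcast values by a constant that propagates linearly into both the value share and the MAC share, so that it is faithfully captured by $(\delta_i,\epsilon_i)$ in $\mathcal{F}_{OLin}$. The authenticity property of authenticated shares — forging a share of $x$ into a consistent share of $x+\beta$ succeeds with probability at most $2^{-\lfloor\log p\rfloor}$ — is what makes this relaxed functionality meaningful and guarantees that the later consistency check rejects a cheating $P_0$ except with negligible probability; the composition step additionally needs $\pi_{Mtriple}$ and $\pi_{Ctriple}$ to remain secure under the composition used here, which is inherited from Theorem~\ref{theorem1} and its analogue.
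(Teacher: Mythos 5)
Your proposal is correct and takes essentially the approach the paper intends: the paper's own ``proof'' of this theorem is a one-line deferral to the SPDZ online-phase argument of \cite{damgaard2012multiparty}, and what you have written is precisely that argument (hybrid model over the triple functionalities, an abort-or-additive-error relaxed functionality, one-time-pad simulation of the opened values and masked inputs, and MAC authenticity underwriting the downstream consistency check) instantiated for the model-holder-malicious setting. You have in effect supplied the details the paper omits; there is no gap.
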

\begin{proof}
The proof logic of this theorem is identical to that of \cite{damgaard2012multiparty}.  Interested readers can refer to \cite{damgaard2012multiparty} for more details.
\end{proof}

\subsubsection{\quad \; Perform non-linear layers in the online phase}
\label{Perform non-linear layers in the online phase}

\begin{table}[htb]
\centering
\small
\begin{tabular}{|p{8.0cm}|}
\hline
 \textbf{Input:}$P_0$ holds $[\![\mathbf{v}_i]\!]_0$  and $P_1$ holds $[\![\mathbf{v}_i]\!]_1$ for $i\in[m]$ and $b=\{0, 1\}$. In addition, $P_1$ holds the MAC key $\alpha$.\\
 \textbf{Output:} $P_b$ obtains $[\![\mathbf{x}_i=ReLU(\mathbf{v}_i)]\!]_b$  and $\langle \alpha\mathbf{v}_i\rangle_b$ for $i\in[m]$ and $b=\{0, 1\}$.\\
 \textbf{Procedure}(take single $\mathbf{v}_i$ as an example):
 \begin{packeditemize}
\item[1.] Garbled Circuit Phase:
\begin{packeditemize}
\item $P_0$ and $P_1$ invoke  the OT$_{\lambda}^{\kappa}$ (see Section~\ref{Oblivious Transfer}), where $P_1$'s inputs are  $\{\mathtt{lab}_{j,0}^{in}, \mathtt{lab}_{j,1}^{in}\}_{j\in\{\kappa+1, \cdots, 2\kappa\}}$ while $P_0$'s input is $\left \langle \mathbf{v}_i \right \rangle_0$. Hence, $P_0$ learns  $\{\mathtt{\tilde{lab}}_{j}^{in}\}_{j\in\{\kappa+1, \cdots, 2\kappa\}}$. Also, $P_1$ sends its garbled inputs $\{ \{\mathtt{\tilde{lab}}_{j}^{in}=\mathtt{lab}_{j, \left \langle \mathbf{v}_{i} \right \rangle_1[j]}\}_{j\in[\kappa]}$ to $P_0$.
\item With $\mathtt{GC}$  and  $\{\mathtt{\tilde{lab}}_{j}^{in}\}_{j\in[2\kappa]}$, $P_0$ evaluates $\mathtt{GCEval}(\mathtt{GC}, \{\mathtt{\tilde{lab}}_{j}^{in}\}_{j\in[2\kappa]})\rightarrow \{\mathtt{\tilde{lab}}_{j}^{out}\}_{j\in[2\kappa]}$.
\end{packeditemize}
\item[2.] Authentication Phase 1:
\begin{packeditemize}
\item  $P_0$ parses $\mathtt{\tilde{lab}}_{j}^{out}$ as $\tilde{\varsigma}_{j}||\tilde{\vartheta}_{j}$ where $\tilde{\varsigma}_{j}\in\{0, 1\}$ and $\tilde{\vartheta}_{j}\in\{0,1\}^{\lambda-1}$ for every $j\in[2\kappa]$.
\item  $P_0$ computes $c_j=ct_{j, \tilde{\varsigma}_{j}}\oplus \mathbf{Trun}_\kappa(\tilde{\vartheta}_{j})$ and $(d_j||e_j)=\hat{ct}_{j, \tilde{\varsigma}_{i+\kappa}}\oplus\mathbf{Trun}_{2\kappa}(\tilde{\vartheta}_{j+\kappa})$ for every $j\in[\kappa]$.
\end{packeditemize}
\item[3.] Local Computation Phase:
\begin{packeditemize}
\item $P_1$ outputs $\langle g_1\rangle_1=(-\sum_{j\in[\kappa]}\iota_{j, 1}2^{j-1})$, $\langle g_2\rangle_1=(-\sum_{j\in[\kappa]}\eta_{j, 1}2^{j-1})$ and $\langle g_3\rangle_1=(-\sum_{j\in[\kappa]}\gamma_{j, 1}2^{j-1})$.
\item $P_0$ outputs $\langle g_1\rangle_0=(\sum_{j\in[\kappa]}c_j2^{j-1})$, $\langle g_2\rangle_0=(\sum_{j\in[\kappa]}d_j2^{j-1})$ and $\langle g_3\rangle_0=(\sum_{j\in[\kappa]}e_j2^{j-1})$.
\end{packeditemize}
\item[4.] Authentication Phase 2:
\begin{packeditemize}
\item For every $\mathbf{v}_i$ where $i\in [m]$, $P_b$ randomly select a fresh authenticated triple $\{[\![x]\!]_b, [\![y]\!]_b, [\![z]\!]_b\}_{b\in \{0, 1\}}$.
\item All parties reveal $\mathbf{v}_i-x$ and $g_2-y$ to each other, and then locally compute $\langle z_2 \rangle_b=\langle \mathbf{v}_i\cdot sign(\mathbf{v}_i) \rangle_b$ and $\langle z_3 \rangle_b=\langle \alpha \mathbf{v}_i\cdot sign(\mathbf{v}_i) \rangle_b$ based on Eqn.(\ref{eq6}).
\item $P_b$ obtains $[\![\mathbf{x}_i=ReLU(\mathbf{v}_i)]\!]_b=(\langle z_2 \rangle_b, \langle z_3 \rangle_b)$  and $\langle \alpha\mathbf{v}_i\rangle_b=\langle g_1\rangle_b$.
\end{packeditemize}
\end{packeditemize}\\
\hline
\end{tabular}
\caption{Online non-linear layers protocol $\pi_{ONlin}$}
\label{Online non-linear layers protocol}
\end{table}

In this section, we present the technical details of the execution of nonlinear functions in the online phase. We mainly focus on how to securely compute the activation function ReLU, which is the most representative nonlinear function in deep neural networks. As shown in Figure~\ref{Online linear layers protocol},  the result $\mathbf{v}_i$ obtained from each linear layer $\mathbf{L}_i$ is held by both parties in the format of authenticated  sharing. Similarly, for the function $f_i$ in the $i$-th nonlinear layer, the goal of \Name is to securely compute $f_i(\mathbf{v}_i)$ and share it  to the model holder and client in the authenticated sharing manner.  We describe  details in  \textbf{Figure}~\ref{Online non-linear layers protocol}.



\textbf{Garbled Circuit Phase}. As described in Section~\ref{Preprocessing for the nonlinear layer}, in the offline phase, $P_1$ constructs a $\mathtt{GC}$ for the nonlinear part of ReLU (i.e., $sign(\mathbf{v_i})$  for arbitrary input $\mathbf{v_i}\in \mathbb{F}_p$) and sent it to $P_0$. In the online phase,  $P_0$ and $P_1$ invoke  the OT$_{\lambda}^{\kappa}$, where $P_1$ as the sender whose inputs are  $\{\mathtt{lab}_{j,0}^{in}, \mathtt{lab}_{j,1}^{in}\}_{j\in\{\kappa+1, \cdots, 2\kappa\}}$ while $P_0$'s (as the receiver) input is $\left \langle \mathbf{v}_i \right \rangle_0$. As a result, $P_0$ gets set of garbled inputs of $\mathbf{v_i}$ in $\mathtt{GC}$. Then, $P_0$ evaluates $\mathtt{GC}$ with garbled inputs of $\mathbf{v_i}$ and learns  the set of output labels for the bits of $\mathbf{v}_i$ and $sign(\mathbf{v}_i)$.

\textbf{Authentication Phase 1}. This phase aims to calculate the share of the authentication of each bit of $\mathbf{v_i}$, i.e., $sign(\mathbf{v}_i)[j], \alpha sign(\mathbf{v}_i)[j]$,   and $\alpha \mathbf{v}_i[j]$  for $j\in[\kappa]$, based on the previous phase. We take an example of how to calculate $\alpha \mathbf{v}_i$. It is clear that the share of $\alpha \mathbf{v}_i[j]$ is either 0 or $\alpha$ depending on whether $\mathbf{v}_i[j]$ is 0 or 1. Recall that the output of the GC is two output labels corresponding to each $\mathbf{v}_i[j]$ (each one for $\mathbf{v}_i[j]=0$ and 1). We use the symbol $\mathtt{lab}_{j,0}^{out}$ and $\mathtt{lab}_{j,1}^{out}$ to denote $\mathbf{v}_i[j]=0$ and $\mathbf{v}_i[j]=1$, respectively.  To calculate the shares of $\alpha\mathbf{v}_i[j]$, $P_1$ randomly selects $\iota_{j}\in \mathbb{F}_p$ in the offline phase and encrypts it as  $\mathtt{lab}_{j,1}^{out}$ and encrypts $\iota_{j}+\alpha$  as $\mathtt{lab}_{j,0}^{out}$.  $P_1$ sends the two ciphertexts to $P_0$ and sets its own share of   $\alpha \mathbf{v}_i[j]$ to $-\iota_{j}$.  Since $P_0$ has obtained $\mathtt{lab}_{j,\mathbf{v}_i[j]}^{out}$  in the previous phase, it can definitely decrypt it and obtain its own share of $\alpha \mathbf{v}_i[j]$. Computation of $sign(\mathbf{v}_i)[j]$ and $ \alpha sign(\mathbf{v}_i)[j]$ follows a similar logic, utilizing the random values $\eta_{j, 1}$, $\gamma_{j, 1}$ sent by $P_1$ to $P_0$ in the offline phase, respectively.

\textbf{Local Computation Phase}. This process is used to calculate the share of $sign(\mathbf{v}_i), \alpha sign(\mathbf{v}_i)$,   and $\alpha \mathbf{v}_i$ based on the results learned by all parties in the previous stage. For example, to compute the share of $\alpha \mathbf{v}_i$, each party locally multiplies the share of $\alpha \mathbf{v}_i[j]$ with $2^{j-1}$ and sums all the resultant values. Each party computes the share of $sign(\mathbf{v}_i)$ and $ \alpha sign(\mathbf{v}_i)$ in a similar manner.

\textbf{Authentication Phase 2}. We compute  the shares of $ReLU(\mathbf{v}_i)=\mathbf{v}_isign(\mathbf{v}_i)$, and $\alpha ReLU(\mathbf{v}_i)$. Since each party holds the authenticated shares of $\mathbf{v}_i$ and $sign(\mathbf{v}_i)$, we can achieve this based on Eqn.(\ref{eq6}).

\textit{Remark 4.4}. We adopt two methods to minimize the number of multiplication operations involved in the $\mathtt{GC}$. One is to compute the  garbled output of per-bit of $sign(\mathbf{v}_i)$ in $\mathtt{GC}$. Another is to encapsulate only the nonlinear part of ReLU into $\mathtt{GC}$. In this way, we avoid  computing $\alpha ReLU(\mathbf{v}_i)$ and $ ReLU(\mathbf{v}_i)$ in $\mathtt{GC}$, which is multiply operation intensive. Compared with works \cite{keller2018overdrive,keller2020mp,chen2020maliciously} with  malicious adversary, \Name reduces the communication overhead of each ReLU function from $2c\lambda+190\kappa\lambda+232\kappa^2$ to $2d\lambda+4\kappa\lambda+6\kappa^2$, where $d\ll c$.

\textit{Remark 4.5}. We devise a lightweight method to check whether the model holder's input at the non-linear layer is consistent with what it has learned at the previous layer. Specifically, at the end of evaluating the $i-1$-th linear layer,  both parties learns the share of $\alpha \mathbf{v}_i$. Then, $\mathbf{v}_i$ is used as the input of the $i$-th nonlinear.  To check that $P_0$ is fed the correct input, We require $\alpha \mathbf{v}_i$  to be recomputed in $\mathtt{GC}$ and share  again to  both parties. Therefore, after evaluating each nonlinear layer, both parties hold two independent shares of $\alpha \mathbf{v}_i$. This provides a way to determine if $P_0$ provided the correct  input by verifying that  the two independent shares are consistent (See Section~\ref{Consistency check} for more details).

\vspace{5pt}
\noindent\textbf{Correctness}. We analyze the correctness of our protocol $\pi_{ONlin}$ as follows. Based on the correctness of OT$_{\lambda}^{\kappa}$, the model holder $P_0$ holds $ \{\mathtt{\tilde{lab}}_{j}^{in}=\mathtt{lab}_{j, \left \langle \mathbf{v}_i \right \rangle_0[j]}\}_{j\in\{\kappa+1, \cdots 2\kappa\}}$. Using  $ \{\mathtt{\tilde{lab}}_{j}^{in}$ $=\mathtt{lab}_{j, \left \langle \mathbf{v}_i \right \rangle_1[j]}\}_{j\in[\kappa]}$ for $j\in[\kappa]$,  and the correctness of ($\mathtt{Garble}$, $\mathtt{GCEval}$) for circuit $booln^f$,  we learn  $\mathtt{\tilde{lab}}_{j}^{out}=\mathtt{lab}_{j,\mathbf{v}_i[j]}^{out}$ and $\mathtt{\tilde{lab}}_{j+\kappa}^{out}=\mathtt{lab}_{j+\kappa,sign(\mathbf{v}_i)[j]}^{out}$, for $j\in[\kappa]$. Therefore, for $i\in[k]$,
we have $\tilde{\varsigma}_{j}||\tilde{\vartheta}_{j}=\varsigma_{j, \mathbf{v}_i[j]}||\vartheta_{j, \mathbf{v}_i[j]}$ and
$\tilde{\varsigma}_{j+\kappa}||\tilde{\vartheta}_{j+\kappa}=\varsigma_{j+\kappa, sign(\mathbf{v})_i[j]}||\vartheta_{j+\kappa, sign(\mathbf{v}_i)[j]}$. Hence, $c_j=ct_{j, \varsigma_{j, \mathbf{v}_i[j]}}$ $\oplus\mathbf{Trun}_{\kappa}(\vartheta_{j, \mathbf{v}_i[j]})=\iota_{j, \mathbf{v}_i[j]}$ and $(d_j||e_j)=\hat{ct}_{j, \varsigma_{j+\kappa, sign(\mathbf{v}_i)[j]}}\oplus\mathbf{Trun}_{2\kappa}(\vartheta_{j+\kappa, sign(\mathbf{v}_i)[j]})=\eta_{j, sign(\mathbf{v}_i)[j]}||\gamma_{j,  sign(\mathbf{v}_i)[j]}$. Based on these, we have
\begin{packeditemize}
\begin{small}
\item  $g_1=\sum_{j\in[\kappa]}(c_j-\iota_{j, 0})2^{j-1}=\sum_{j\in[\kappa]}\alpha(\mathbf{v}_i[j])2^{j-1}=\alpha\mathbf{v}_i$.
\item $g_2=\sum_{j\in[\kappa]}(d_j-\eta_{j, 0})2^{j-1}=\sum_{j\in[\kappa]}(sign(\mathbf{v}_i)[j])2^{j-1}=sign(\mathbf{v}_i)$.
 \item $g_3=\sum_{j\in[\kappa]}(e_j-\gamma_{j, 0})2^{j-1}=\sum_{j\in[\kappa]}\alpha (sign(\mathbf{v}_i)[j])2^{j-1}=\alpha sign(\mathbf{v}_i)$.
 \end{small}
\end{packeditemize}
Since each party  holds the authenticated shares of $\mathbf{v}_i$ and $sign(\mathbf{v}_i)$, we can easily compute the shares of $f(\mathbf{v}_i)=\mathbf{v}_isign(\mathbf{v}_i)$, and $\alpha f(\mathbf{v}_i)$. This concludes the correctness proof.

\noindent\textbf{Security}. Our protocol for  performing  nonlinear layer operations in the online phase, $\pi_{ONlin}$, is secure against the malicious model holder $P_0$  and the semi-honest client $P_1$. We provide the following theorem and prove it in Appendix~\ref{Proof of Theorem 4}.
\begin{theorem}
Let ($\mathtt{Garble}$, $\mathtt{GCEval}$) be a garbling scheme with the properties defined in Section~\ref{garbled circuits}. Authenticated shares have the properties defined in Section~\ref{Secret Sharing}. Then our  protocol $\pi_{ONlin}$ is secure against the malicious model holder $P_0$  and the semi-honest client $P_1$.
\end{theorem}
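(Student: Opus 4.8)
The plan is to prove security in the standard real/ideal simulation paradigm under static corruption, working in the $(\mathcal{F}_{OT},\mathcal{F}_{tri})$-hybrid model so that the oblivious transfer of garbled-input labels and the fresh authenticated triple consumed in Authentication Phase~2 are treated as ideal. First I would pin down the ideal functionality $\mathcal{F}_{ONlin}$: on $P_1$'s input $([\![\mathbf{v}_i]\!]_1,\alpha)$ and on the bit-string $\langle\mathbf{v}_i\rangle_0^{\ast}$ that $P_0$ actually submits to the OT, it reconstructs $\mathbf{v}_i=\langle\mathbf{v}_i\rangle_0^{\ast}+\langle\mathbf{v}_i\rangle_1$, sets $\mathbf{x}_i=ReLU(\mathbf{v}_i)$, and returns fresh authenticated shares of $\mathbf{x}_i$, $\alpha\mathbf{v}_i$, $sign(\mathbf{v}_i)$ and $\alpha sign(\mathbf{v}_i)$, while allowing a corrupted $P_0$ to dictate its own output shares and an additive offset on the reconstructed values; the honest party's shares are then fixed so the reconstructions are consistent. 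This is the natural target because, as the simulation will show, garbled-circuit authenticity and OT extraction prevent $P_0$ from influencing the computed value except through the bits it feeds to the OT and through lying in its (partly interactive) post-processing, and any resulting inconsistency is caught afterwards by the separate consistency-check protocol, which is where the $2^{-\lfloor\log p\rfloor}$ MAC slack is spent and which does not enter this proof.

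For a corrupted (malicious) $P_0$, the simulator $\mathcal{S}_{P_0}$ proceeds as follows. Offline, it runs the garbling simulator $\mathtt{Sim}(1^\lambda, booln^C)$ to obtain a simulated $\mathtt{GC}$ (by the definition in Section~\ref{garbled circuits} this is input-independent) and simulated garbled inputs, records the output labels obtained by running $\mathtt{GCEval}$ on them, sends $\mathtt{GC}$ to $P_0$, and sends uniformly random strings in place of every $ct_{i,b}$ and $\hat{ct}_{i,b}$ — legitimate because $P_0$ holds no output label yet, so every one-time pad derived from $\vartheta_{i,j}$ is unknown to it. Online, $\mathcal{S}_{P_0}$ plays the $\mathcal{F}_{OT}$ sender, extracting $P_0$'s selection string $\langle\mathbf{v}_i\rangle_0^{\ast}$ and returning the simulated labels, and also sends the simulated garbled inputs of the first $\kappa$ wires. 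It now knows which output label of each wire $P_0$ holds, hence the color bits $\tilde\varsigma_j$, hence the values $c_j,d_j,e_j$ that $P_0$ derives by XOR with the random strings sent offline; it forwards $\langle\mathbf{v}_i\rangle_0^{\ast}$ to $\mathcal{F}_{ONlin}$, plays $\mathcal{F}_{tri}$ to set the triple shares and to sample the Authentication-Phase-2 openings $\mathbf{v}_i-x$ and $g_2-y$ uniformly, declares $P_0$'s output shares to be exactly the quantities $P_0$ computes from $(c_j,d_j,e_j)$ and those openings, and delivers the result. Indistinguishability is a short hybrid chain: (i) swap the real $\mathtt{GC}$ and garbled inputs for the simulated ones, justified by GC \emph{security}; (ii) for the one output label per wire that $P_0$ never obtains, replace its pad by a uniform string, justified by GC \emph{authenticity}, which makes the corresponding unused $ct/\hat{ct}$ ciphertext uniform in $P_0$'s view; (iii) replace the plaintexts behind the used ciphertexts, namely $\iota_{j,\cdot},\eta_{j,\cdot},\gamma_{j,\cdot}$, by fresh uniform field elements, valid because $\iota_{j,1},\eta_{j,1},\gamma_{j,1}$ are uniform and occur nowhere else in $P_0$'s view, so each decrypted value is independent of $\alpha$ and of the wire value — the color bit leaks nothing about the latter by point-and-permute; (iv) replace the Authentication-Phase-2 openings by uniform strings using freshness of $(x,y,z)$. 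After these swaps the real execution coincides with the ideal one.

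For a corrupted (semi-honest) $P_1$, the simulator $\mathcal{S}_{P_1}$ is immediate: $P_1$ follows the protocol, so $\mathcal{S}_{P_1}$ holds all of $P_1$'s coins ($\mathtt{Garble}$ randomness, $\{\iota,\eta,\gamma\}$, triple shares) and need only manufacture $P_1$'s incoming messages — the $\mathcal{F}_{OT}$ transcript, from which the sender learns nothing, and $P_0$'s openings $\langle\mathbf{v}_i-x\rangle_0$ and $\langle g_2-y\rangle_0$ in Authentication Phase~2, which are uniform by freshness of $x,y$ and can be sampled consistently with the output $[\![\mathbf{x}_i]\!]_1$ returned by $\mathcal{F}_{ONlin}$; indistinguishability here is perfect, from the perfect hiding of fresh authenticated shares. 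I expect the crux to be step~(iii) of the malicious-$P_0$ case together with the ordering subtlety it exposes: $\mathcal{S}_{P_0}$ must commit to $ct_{i,b},\hat{ct}_{i,b}$ \emph{offline}, before $P_0$'s OT choices and hence before the output labels that "explain" those ciphertexts are pinned down, so they must be sent as uniform strings and only reinterpreted online via the now-determined labels; one then has to argue that this reinterpretation, combined with GC authenticity, perfectly mimics the joint real distribution of the single opened value and the single hidden ciphertext per wire, uniformly in $\alpha$. A lesser obstacle is phrasing $\mathcal{F}_{ONlin}$ so that deviation in $P_0$'s local computation and in its Authentication-Phase-2 openings is faithfully modeled as "$P_0$ chooses its output shares plus an additive offset," while the underlying $ReLU$ value stays correct by GC authenticity and OT extraction.
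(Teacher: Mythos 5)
Your proposal is correct and follows essentially the same route as the paper's proof: a real/ideal simulation for the malicious $P_0$ via a hybrid chain that (a) uses garbled-circuit \emph{authenticity} to replace the unused output label and hence the unopened $ct/\hat{ct}$ ciphertext per wire by uniform strings, (b) uses the uniformity of the one-time pads $\iota,\eta,\gamma$ to decouple the decrypted values $c_j,d_j,e_j$ from $\alpha$ and the wire bits (sampling them subject only to summing to the functionality's output shares), and (c) uses garbled-circuit \emph{security} to remove the dependence on $\langle\mathbf{v}_i\rangle_1$, with the semi-honest-$P_1$ case dispatched by noting $P_1$ receives essentially nothing. The only caveat is your hybrid ordering: you invoke the GC simulator first and authenticity second, whereas authenticity must be applied while the garbled circuit is still the real one (the paper's order is authenticity, then decoupling, then security), but this is a reordering of the same steps rather than a different argument.
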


\subsection{Consistency Check}
\label{Consistency check}
 \Name performs $\pi_{OLin}$ and $\pi_{ONlin}$ alternately in the online phase to output the inference result $\mathbf{M}(\mathbf{x}_0)$ for a given input $\mathbf{x}_0$, where all intermediate results output by the nonlinear layer and the linear layer are held on $P_0$ and $P_1$ in an authenticated sharing manner. To verify the correctness of  $\mathbf{M}(\mathbf{x}_0)$, the client needs to perform a consistency check on all computed results. If the verification passes, $P_1$ locally evaluates the fairness of the ML model based on Eqn.(\ref{eq2}). Otherwise, abort. In more detail, for sharing $P_0$'s input and executing  each linear layer $\{ \mathbf{L}_i\}_{i\in[m]}$,  \Name  needs to pick up a large number of fresh authenticated single elements or triples (see Figure~\ref{Online linear layers protocol}) and open them for computation. Assume that the set of all opened elements is $(a_1, a_2 \cdots, a_t)$, and $P_b$ holds $\left \langle \mathbf{\rho}_i \right \rangle_b=\left \langle \alpha a_i \right \rangle_b$ as well as $\left \langle \mathbf{\tau}_i \right \rangle_b=\left \langle a_i \right \rangle_b$, we need to perform a consistency check to verify  $\mathbf{\rho}_i-\alpha  \mathbf{\tau}_i=0$. Beside, For executing each nonlinear layer $\{f_i\}_{i\in[m-1]}$, the inputs of $\pi_{ONlin}$ are shares of  $\mathbf{v}_i$ and  $\mathbf{\tau}_i=\alpha \mathbf{v}_i$.  To check that $P_0$ is fed the correct input, We require $\alpha \mathbf{v}_i$  to be recomputed in the $\mathtt{GC}$ and share it again to  both parties, denoting the new  $\alpha \mathbf{v}_i$ as $\mathbf{\xi}_i$.
 We also need to perform a consistency check to verify  $\sum_{i=1}^{i=m} \mathbf{\tau}_i-\mathbf{\xi}_i=\mathbf{0}$.

\begin{table}[htb]
\centering
\small
\begin{tabular}{|p{8.0cm}|}
\hline
 \textbf{Input:}$P_b$ $b\in \{0, 1\}$ holds $\left \langle \tau_i \right \rangle _b$, $\left \langle \xi_i \right \rangle _b$ and $[\![a_j]\!]_b$ for $i\in[m-1]$ and $j\in [t]$. \\
 \textbf{Output:} $P_1$ obtains $\mathbf{M}(\mathbf{x}_0)$ if verification passes. Otherwise, abort. \\
 \textbf{Procedure}
\begin{packeditemize}
\item For $i\in[m]$ and $j\in [t]$, $P_1$  uniformly samples $\mathbf{r}_i$ and $\mathbf{r}_j$ and sends them to $P_0$.
\item  $P_0$ computes $\left \langle q \right \rangle_0=\sum_{j\in [t]} \mathbf{r}_j(\left \langle \mathbf{\rho}_j \right \rangle_0-\alpha_0a_j)+ \sum_{i\in [m-1]}\mathbf{r}_i(\left \langle \mathbf{\tau}_i \right \rangle_0- \left \langle \mathbf{\xi}_i \right \rangle_0)$, and sends $\left \langle q \right \rangle_0$ to $P_1$.
\item  $P_1$ computes $\left \langle q \right \rangle_1=\sum_{j\in [t]} \mathbf{r}_j(\left \langle \mathbf{\rho}_j \right \rangle_1-\alpha_1a_j)+ \sum_{i\in [m-1]}\mathbf{r}_i(\left \langle \mathbf{\tau}_i \right \rangle_1- \left \langle \mathbf{\xi}_i \right \rangle_1)$.
\item $P_1$ aborts if $\left \langle q \right \rangle_0+ \left \langle q \right \rangle_1 \neq 0 \mod p$. Else, $P_1$ locally evaluates the fairness of the ML model based on Eqn.(\ref{eq2}) by reconstructing  $\mathbf{M}(\mathbf{x}_0)$.
\end{packeditemize}\\
\hline
\end{tabular}
\caption{Consistency check protocol $\pi_{Ocheck}$}
\label{Online check protocol}
\end{table}
Figure~\ref{Online check protocol} presents the details of consistency check, where  we combine all the above checks into a single check by using random scalars picked by $P_1$. The correctness of  $\pi_{Ocheck}$  can be easily deduced by inspecting the implementation of the protocol. Specifically, By correctness of $\pi_{OLin}$, we have $\mathbf{\rho}_j-\alpha  \mathbf{\tau}_j=(\left \langle \mathbf{\rho}_j \right \rangle_0-\alpha_0a_j+\left \langle \mathbf{\rho}_j \right \rangle_1-\alpha_1a_j)=0$  for  every linear layer $\{ \mathbf{L}_j\}_{j\in[m]}$.  By correctness of $\pi_{ONlin}$, we have $\mathbf{\tau}_i-\mathbf{\xi}_i=(\left \langle \mathbf{\tau}_i \right \rangle_0- \left \langle \mathbf{\xi}_i \right \rangle_0)+(\left \langle \mathbf{\tau}_i \right \rangle_1- \left \langle \mathbf{\xi}_i \right \rangle_1)=0$ for all nonlinear layers. Hence, we have $\left \langle q \right \rangle_0+ \left \langle q \right \rangle_1 =\sum_{j\in [t]} \mathbf{r}_j(\mathbf{\rho}_j-\alpha  \mathbf{\tau}_j)+  \sum_{i\in [m-1]}\mathbf{r}_i(\mathbf{\tau}_i-\mathbf{\xi}_i)=0$.

\noindent\textbf{Security}. We demonstrate that the consistency check protocol $\pi_{Ocheck}$ have an overwhelming probability to abort if $P_0$ tampered with the input during execution. We provide the following theorem and prove it in Appendix~\ref{Proof of Theorem 5}.
\begin{theorem}
In real execution, if $P_0$ tampers with its input, then $P_1$ aborts with probability at least $1-1/p$.
\end{theorem}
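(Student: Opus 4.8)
\textit{Proof proposal.} The plan is to adapt the standard soundness analysis of SPDZ-style MAC checks to the \textit{model holder-malicious} setting. First I would fix an arbitrary strategy for $P_0$ and encode all of its deviations as additive errors. For each authenticated value $a_j$ that is opened during input sharing and the linear layers of $\pi_{OLin}$ (there are $t$ of them), let $\delta_j$ be the difference between the value $P_0$ actually reveals and the value it would reveal if honest, and let $\epsilon_j$ be the additive discrepancy $P_0$ injects into its MAC share $\langle\rho_j\rangle_0$; at the $i$-th nonlinear layer of $\pi_{ONlin}$ let $\phi_i$ be the difference between the share $\langle\mathbf{v}_i\rangle_0$ that $P_0$ feeds into $\mathtt{GC}$ (via the OT selection) and the share it holds coming out of the linear layer. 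Using correctness of $\pi_{OLin}$ and $\pi_{ONlin}$ together with the \emph{authenticity} of the garbling scheme — so that $\mathtt{GC}$ faithfully recomputes $\alpha\mathbf{v}_i$ on the possibly-wrong input, yielding the ``new'' value $\xi_i=\alpha\mathbf{v}_i+\alpha\phi_i$ up to a further $P_0$-chosen additive term — every honest term of the quantity $\langle q\rangle_0+\langle q\rangle_1$ that $P_1$ tests in $\pi_{Ocheck}$ cancels, leaving $\langle q\rangle_0+\langle q\rangle_1=\Delta'-\alpha\Lambda$, where $\Lambda=\sum_{j\in[t]}\mathbf{r}_j\delta_j+\sum_{i\in[m-1]}\mathbf{r}_i\phi_i$ and $\Delta'$ lumps together the MAC-share errors $\epsilon_j$, their nonlinear analogues, and the deviation $P_0$ may add when it sends $\langle q\rangle_0$. ``$P_0$ tampers with its input'' means precisely that the coefficient vector $(\delta_j,\phi_i)$ of $\Lambda$ is nonzero; crucially, $\Lambda$ is fixed before $P_1$ reveals the challenges $(\mathbf{r}_j,\mathbf{r}_i)$, whereas $\Delta'$ is a function of $P_0$'s view (so it may depend on the challenges) but not of the MAC key $\alpha$.

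The heart of the argument is the lemma that, conditioned on $P_0$'s entire view at the moment $\pi_{Ocheck}$ starts, the key $\alpha=\langle\alpha\rangle_0+\langle\alpha\rangle_1$ is uniform over $\mathbb{F}_p$. Here $P_0$ holds only the share $\langle\alpha\rangle_0$, and I would verify that every other occurrence of $\langle\alpha\rangle_1$ (equivalently of $\alpha$) in $P_0$'s view is one-time padded by fresh $P_1$-randomness: the MAC shares $\langle\alpha\mathbf{X}\rangle_0,\langle\alpha\mathbf{y}\rangle_0,\dots$ returned in $\pi_{Mtriple}$ and $\pi_{Ctriple}$ are blinded by the uniform terms $\langle\alpha\mathbf{X}\rangle_1,\dots$ even though $P_0$ can decrypt the relevant ciphertexts (this is exactly what the simulators behind Theorem~\ref{theorem1} and its convolution analogue exploit); the offline nonlinear ciphertexts $ct,\hat{ct}$ and the garbled output labels $P_0$ later decrypts in $\pi_{ONlin}$ act as one-time pads on $\iota_{i,1},\eta_{i,1},\gamma_{i,1}$ by the \emph{security} of the garbling scheme; and the OT and $\mathtt{GC}$ invocations leak nothing further about $\alpha$. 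Hence $\langle\alpha\rangle_1$, and therefore $\alpha$, remains uniform given $P_0$'s view, so the errors $(\delta_j,\epsilon_j,\phi_i)$ and the value $\Delta'$ are all statistically independent of $\alpha$.

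I would then finish by a case split over the uniformly random challenges. If $\Lambda\neq 0$, then $\langle q\rangle_0+\langle q\rangle_1=0$ forces $\alpha=\Delta'/\Lambda$, a target that does not depend on $\alpha$, so $P_1$ fails to abort with probability only $1/p$ over the choice of $\alpha$. If $\Lambda=0$, then $\Lambda$ is a not-identically-zero linear form (some $\delta_j$ or $\phi_i$ is nonzero) evaluated at a uniform point, an event of probability at most $1/p$. Combining the two, $P_1$ escapes detection only if it guesses $\alpha$ (the generic case) or the challenges fortuitously annihilate the error, each happening with probability $1/p$, so $P_1$ aborts with probability at least $1-1/p$ (a coarser union bound over the two evasion routes gives $1-2/p$, which is the same up to a term negligible for the prime sizes used).

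I expect the genuine obstacle to be the uniformity lemma rather than the arithmetic: one must argue, across the composition of the two offline triple-generation protocols, the offline nonlinear preprocessing, and the two online protocols, that \emph{no} partial information about $\langle\alpha\rangle_1$ ever reaches $P_0$ — in particular that the ciphertexts $P_0$ decrypts in $\pi_{Mtriple}$ and $\pi_{Ctriple}$ and the labels it decrypts in $\pi_{ONlin}$ are perfectly blinded — so that conditioning on its view really does leave $\alpha$ uniform; the security statements already established for the component protocols do most of this work. A secondary, purely mechanical step is to justify that an arbitrary deviation of $P_0$ is captured by the additive-error form above, for which the ZK proofs of ciphertext well-formedness in $\pi_{Mtriple}$/$\pi_{Ctriple}$ and the authenticity of $\mathtt{GC}$ are exactly what rule out more exotic misbehavior such as malformed ciphertexts or forged output labels.
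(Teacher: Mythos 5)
Your proposal is correct and its core is the same as the paper's: write the checked quantity as a degree-one polynomial $Q(\alpha)=\Delta'-\alpha\Lambda$ in the MAC key and bound the probability that a uniformly random $\alpha$ hits its unique root. The paper's own proof is essentially just that last sentence; you supply two things it leaves implicit. First, the uniformity lemma — that $\alpha$ remains uniform conditioned on $P_0$'s entire view across $\pi_{Mtriple}$, $\pi_{Ctriple}$, the nonlinear preprocessing, and the online protocols — which is indeed the real content of the argument and which the paper simply assumes when it says ``over the choice of $\alpha$.'' Second, the case $\Lambda=0$: the paper asserts that $Q$ is a non-zero polynomial ``whenever $P_0$ introduces errors,'' but the coefficient $\Lambda=\sum_j\mathbf{r}_j\delta_j+\sum_i\mathbf{r}_i\phi_i$ can vanish for a nonzero error vector if the challenges $\mathbf{r}$ happen to annihilate it, an event of probability $1/p$ that the paper's proof does not account for. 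Your case split handles this correctly and yields the standard SPDZ-style bound of $1-2/p$ rather than the paper's claimed $1-1/p$; strictly speaking your (slightly weaker, but actually justified) constant is the right one, and the discrepancy is negligible for the 44-bit prime used. The only caveat is that, as you yourself flag, the uniformity lemma needs the blinding terms $\langle\alpha\mathbf{X}\rangle_1$, $\langle\alpha\mathbf{Z}\rangle_1$, $\iota_{i,1}$, $\eta_{i,1}$, $\gamma_{i,1}$ to be fresh and uniform each time they are used, which holds in the protocols as written; citing the simulators of Theorems~\ref{theorem1} and 4 for this is acceptable but a one-line direct one-time-pad argument would make the proof self-contained.
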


 \section{Performance Evaluation}
\label{sec:performance evaluation}
In this section, we conduct experiments to demonstrate the performance of \Name. Since there is no secure inference protocol specifically designed for the malicious model holder threat model, we choose the state-of-the-art generic MPC framework Overdrive \cite{keller2018overdrive}\footnote{Although work \cite{chen2020maliciously} shows better performance compared to Overdrive, it is difficult to compare with \cite{chen2020maliciously} because of the unavailability of its code. However, we clearly outperform \cite{chen2020maliciously} by constructing a more efficient method to generate triples. In addition, \cite{chen2020maliciously} requires fitting nonlinear functions such as ReLU to a quadratic polynomial to facilitate computation, which is also contrary to the motivation of this paper.} as the baseline. Note that we also consider the client as a semi-honest entity when implementing Overdrive, so that Overdrive can also utilize the properties of semi-honest client  to avoid redundant verification and zero-knowledge proof. In this way, we can ``purely" discuss the technical advantages of \Name over Overdrive, while excluding the inherent advantages of \Name due to the weaker  threat model. Specifically, we analyze the performance of \Name from offline and online phases, respectively, where we discuss the superiority  of \Name over Overdrive in terms of computation and communication cost in performing linear and non-linear layers. In the end, We demonstrate the cost superiority of  \Name compared to Overdrive on mainstream models including ResNet-18 and LeNet.

\subsection{Implementation details}
 \Name is implemented through the C++ language and provides 128 bits of computational security and 40 bits of statistical security. The entire system operates on the 44-bit prime field. We utilize the SEAL homomorphic encryption library \cite{sealcrypto} to perform nonlinear layers including generative matrix-vector multiplication and convolution triples, where we set  the maximum number of slots allowed for a single ciphertext as 4096.  The garbled circuit for the nonlinear layer is constructed on the EMP toolkit \cite{wang2016emp} (with the OT protocol that resists active adversaries). Zero-knowledge proofs of plaintext knowledge are implemented based on MUSE \cite{lehmkuhl2021muse}. Our experiments are carried out in both the LAN and WAN settings. LAN is implemented with two workstations in our lab. The client workstation has AMD EPYC 7282 1.4GHz CPUs with 32 threads on 16 cores and 32GB RAM. The server workstation has Intel(R) Xeon(R) E5-2697 v3 2.6GHz CPUs with 28 threads on 14 cores and 64GB RAM. The WAN setting is based on a connection between a local PC and an Amazon AWS server with an average bandwidth of 963Mbps and running time of around 14ms.

\subsection{Performance of offline phase}
\label{Performance of offline phase}
\subsubsection{\quad \; Cost of generating matrix-vector multiplication triple}
 \renewcommand\tablename{TABLE}
\renewcommand \thetable{\Roman{table}}
\setcounter{table}{0}
\setcounter{figure}{5}
\begin{table}[htb]
\centering
\small
\vspace{-10pt}
\caption{Cost of generating the matrix-vector multiplication triple}
\label{Complexity of Matrix-Vector Multiplication}
\setlength{\tabcolsep}{1.9mm}{
\resizebox{\linewidth}{!}{\begin{tabular}{|c|c|c|c|c|c|c|}
\Xhline{1pt}
\multirow{3}*{\textbf{Dimension}}&\multicolumn{2}{c|}{\textbf{Comm.cost (MB)}} &\multicolumn{4}{c|}{\textbf{Running time (s)} }\\ \cline{2-7}
&\multirow{2}*{Overdrive}&\multirow{2}*{\textbf{\Name}(Reduction)} & \multicolumn{2}{c|}{Overdrive} & \multicolumn{2}{c|}{ \Name (speedup)} \\
& & &  LAN & WAN & LAN & WAN\\
\Xhline{1pt}
{$1\times 4096$}& 27.1& {\makecell[c]{\textbf{2.1}\\\textbf{(12.9$\times$)}}}&{2.3}&{17.7}&{\makecell[c]{\textbf{0.9}\\\textbf{ (2.6$\times$)}}}&{\makecell[c]{\textbf{12.4}\\\textbf{(1.5$\times$)}}}\\ \hline
{$16\times 2048$}& 216.4& {\makecell[c]{\textbf{17.6}\\\textbf{(12.3$\times$)}}}&{15.3}&{26.2}&{\makecell[c]{\textbf{7.6}\\\textbf{ (2.0$\times$)}}}&{\makecell[c]{\textbf{14.1}\\\textbf{(1.6$\times$)}}}\\ \hline
{$16\times 4096$}& 432.8& {\makecell[c]{\textbf{34.5}\\\textbf{(12.5$\times$)}}}&{30.6}&{43.4}&{\makecell[c]{\textbf{15.1}\\\textbf{ (2.0$\times$)}}}&{\makecell[c]{\textbf{26.9}\\\textbf{(1.6$\times$)}}}\\ \hline
{$64\times 2048$}&865.6 & {\makecell[c]{\textbf{68.3}\\\textbf{(12.7$\times$)}}}&{60.9}&{72.4}&{\makecell[c]{\textbf{29.2}\\\textbf{ (2.1$\times$)}}}&{\makecell[c]{\textbf{40.7}\\\textbf{(1.7$\times$)}}}\\  \hline
{$64\times 4096$}& 1326.2& {\makecell[c]{\textbf{135.7}\\\textbf{(9.8$\times$)}}}&{103.0}&{114.8}&{\makecell[c]{\textbf{57.8}\\\textbf{ (1.8$\times$)}}}&{\makecell[c]{\textbf{68.2}\\\textbf{(1.6$\times$)}}}\\ \hline
{$128\times 4096$}& 2247.4& {\makecell[c]{\textbf{271.9}\\\textbf{(8.3$\times$)}}}&{187.1}&{199.1}&{\makecell[c]{\textbf{117.3}\\\textbf{ (1.6$\times$)}}}&{\makecell[c]{\textbf{128.4}\\\textbf{(1.5$\times$)}}}\\
\Xhline{1pt}
\end{tabular}}}
\end{table}
TABLE~\ref{Complexity of Matrix-Vector Multiplication} describes the comparison of the overhead of \Name and  Overdrive in generating matrix-vector multiplication triples in different dimensions. It is clear that \Name is  superior in performance to Overdrive, both in terms of communication overhead and computational overhead. We observe that \Name achieves more than $8\times$ reduction in communication overhead and at least $1.5\times$ speedup in computation compared to Overdrive. This stems from Overdrive's disadvantage in constructing triples, i.e. constructing triples for only a single multiplication operation (or multiplication between a single row of a matrix and a vector). In addition, the generation process requires frequent interaction between the client and the model holder (for zero-knowledge proofs and  preventing breaches by either party). This inevitably incurs substantial computational and communication overhead. Our constructed matrix-multiplication triples enable the communication overhead to be independent of the number of  multiplications, only related to the size of the input. This substantially reduces the amount of data that needs to be exchanged between $P_0$ and $P_1$. In addition, we move the majority of the computation to be executed by  $P_1$, which avoids the need for distributed decryption and frequent zero-knowledge proofs in malicious adversary settings. Moreover,  our matrix-vector multiplication does not involve any rotation operation. As a result, these optimization methods motivate \Name to exhibit a satisfactory performance overhead in generating triples.

\subsubsection{\quad \; Cost of generating convolution triple}
\begin{table}[htb]
\centering
\small
\caption{Cost of generating the convolution triple}
\label{Complexity of convolution}
\setlength{\tabcolsep}{1.6mm}{
\resizebox{\linewidth}{!}{\begin{tabular}{|c|c|c|c|c|c|c|c|}
\Xhline{1pt}
\multirow{3}*{\textbf{Input}}&\multirow{3}*{\textbf{Kernel}}&\multicolumn{2}{c|}{\textbf{Comm.cost (GB)}} &\multicolumn{4}{c|}{\textbf{Running time (s)} }\\ \cline{3-8}
&&\multirow{2}*{Overdrive}&\multirow{2}*{\textbf{\Name}} & \multicolumn{2}{c|}{Overdrive} & \multicolumn{2}{c|}{ \Name (Speedup)} \\
& & & & LAN & WAN & LAN & WAN\\
\Xhline{1pt}
{\makecell[c]{$16\times16$ \\$@128$}}& \makecell[c]{$1\times1$ \\$@128$}& 17.1& \textbf{2.1}&{1476.1}&{1494.6}&{\makecell[c]{\textbf{924.7}\\\textbf{ (1.6$\times$)}}}&{\makecell[c]{\textbf{938.4}\\\textbf{(1.6$\times$)}}}\\ \hline
{\makecell[c]{$16\times16$ \\$@256$}}& \makecell[c]{$1\times1$ \\$@256$}& 67.8& \textbf{8.2}&{6059.3}&{6059.31}&{\makecell[c]{\textbf{3568.8}\\\textbf{ (1.7$\times$)}}}&{\makecell[c]{\textbf{3580.8}\\\textbf{(1.7$\times$)}}}\\ \hline
{\makecell[c]{$16\times16$ \\$@512$}}& \makecell[c]{$3\times3$ \\$@128$}& 467.5& \textbf{56.8}&{40753.4}&{40767.1}&{\makecell[c]{\textbf{25387.2}\\\textbf{ (1.6$\times$)}}}&{\makecell[c]{\textbf{25401.5}\\\textbf{(1.6$\times$)}}}\\ \hline
{\makecell[c]{$32\times32$ \\$@2048$}}& \makecell[c]{$5\times5$ \\$@512$}& 83127.8& \textbf{7324.3}&{7245056.2}&{7245068.8}&{\makecell[c]{\textbf{4521023.3}\\\textbf{ (1.6$\times$)}}}&{\makecell[c]{\textbf{4521165.6}\\\textbf{(1.6$\times$)}}}\\ \Xhline{1pt}
\end{tabular}}}
\end{table}
TABLE~\ref{Complexity of convolution} shows the comparison of the performance of \Name and Overdrive in generating convolution triples in different dimensions, where input tensor of size $u_w\times u_h$ with $c_i$ channels  is denoted as $u_w\times u_h @c_i$,  and the size of corresponding  kernel is denoted as  $k_w\times k_h @c_o$. We observe that \Name is much lower than Overdrive in terms of computational and communication overhead. For instance, \Name gains a reduction of up to  $9\times$ in communication cost and a speedup of at least $1.6\times$ in computation. This is due to the optimization method customized by \Name for generating convolution triples. Compared to Overdrive, which focuses on constructing authenticated triples for a single multiplication operation, \Name uses the homomorphic parallel matrix multiplication method constructed in \cite{jiang2018secure} as the underlying structure to construct matrix multiplication triples equivalent to convolution triples.  Since a single matrix is regarded as a computational entity, the above method makes the communication overhead between the client and the model holder only related to the size of the matrix, and independent of the number of operations of the multiplication between the two matrices (that is, the communication complexity is reduced from $O(d^3)$ to $O(d)^2$  given the multiplication between the two $d\times d$ matrices). In addition, the optimized parallel matrix multiplication reduces the homomorphic rotation operation from $O(d^2)$ to $O(d)$. This enables \Name to show significant superiority in computing convolution triples.

\subsection{Performance of online phase}
\label{Performance of online phase}
In the online phase, \Name is required to perform operations at the linear and nonlinear layers alternately. Here we discuss the overhead performance of \Name compared to Overdrive separately.
\subsubsection{\quad \; Performance of executing linear layers}

\begin{table}[htb]
\centering
\footnotesize
\caption{Comparison of the communication overhead  for executing  convolution in the online phase}
\label{online-linear layer costss}
{\begin{tabular}{|c|c|c|c|}
\Xhline{1pt}
\multirow{2}*{\textbf{Input}}&\multirow{2}*{\textbf{Kernel}}&\multicolumn{2}{c|}{\textbf{Comm.cost (MB)}} \\ \cline{3-4}
&&{Overdrive}&{\Name (Reduction)} \\
\Xhline{1pt}
{\makecell[c]{$16\times16$ \\$@128$}}& \makecell[c]{$1\times1$ \\$@128$}& 46.1&{\makecell[c]{\textbf{0.5}\\\textbf{ (85.3$\times$)}}}\\ \hline
{\makecell[c]{$16\times16$ \\$@256$}}& \makecell[c]{$1\times1$ \\$@256$}& 184.5&{\makecell[c]{\textbf{1.4}\\\textbf{(128.0$\times$)}}}\\ \hline
{\makecell[c]{$16\times16$ \\$@512$}}& \makecell[c]{$3\times3$ \\$@128$}& 1271.7&{\makecell[c]{\textbf{15.7}\\\textbf{(81.2$\times$)}}}\\ \hline
{\makecell[c]{$32\times32$ \\$@2048$}}& \makecell[c]{$5\times5$ \\$@512$}& 226073.0&{\makecell[c]{\textbf{1459.8}\\\textbf{(154.9$\times$)}}}\\ \Xhline{1pt}
\end{tabular}}
\end{table}
Since both \Name and Overdrive follow the same computational logic to perform the linear layer in the online phase, i.e. use pre-generated authenticated triples to compute matrix-vector multiplication and convolution, both exhibit similar compu-
tational overhe. Therefore, we focus on analyzing the difference in communication overhead between the two of executing convolution. TABLE~\ref{online-linear layer costss} depicts the communication overhead of \Name and Overdriveffor computing convolution in different dimensions. It is obvious that \Name shows superior performance in communication overhead compared to Overdrive. This is mainly due to the fact that Overdrive needs to open a fresh authenticated Beaver's multiplication triple for each multiplication operation, which makes the communication overhead of executing the entire linear layer positively related to the total multiplication operations involved. In contrast, \Name customizes matrix-vector multiplication and convolution triples, which makes the cost independent of the number of multiplication operations in the linear layer. This substantially reduces the amount of data that needs to be exchanged during the execution.

\subsubsection{\quad \; Performance of executing nonlinear layers}
\begin{figure}[htb]
  \centering
  \subfigure[]{\label{FIGa}\includegraphics[width=0.23\textwidth]{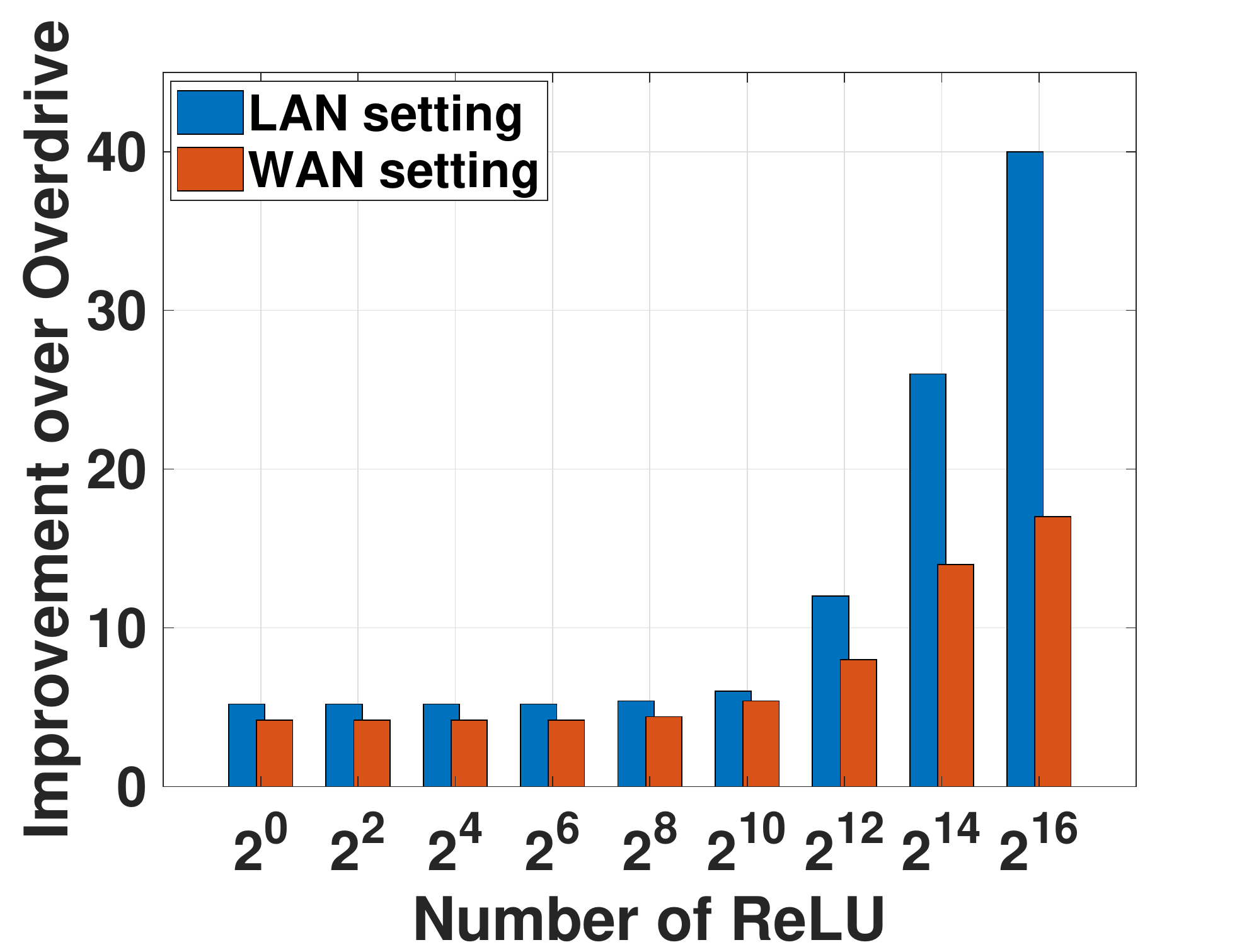}}
 \subfigure[]{\label{FIGb}\includegraphics[width=0.23\textwidth]{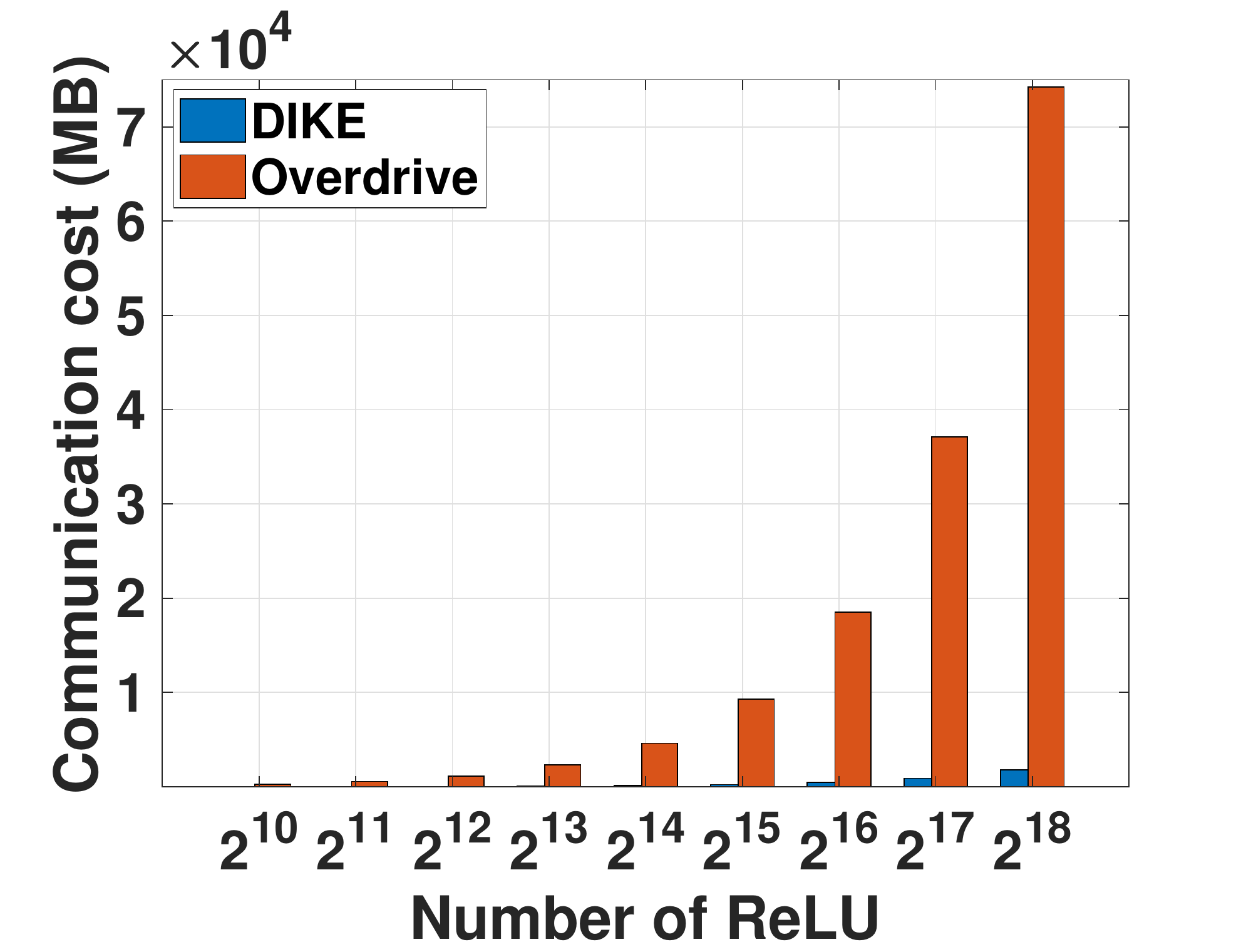}}
  \caption{Comparison of the overhead  for executing nonlinear layers. ((a) Running time improvement of \Name over Overdrive. The y-axis shows $ \rm \frac{Overdrive\; time}{\Name \; time}$  (b) Comparison of the communication overhead.}
  \label{online-linear layer cost}
\end{figure}
Figure~\ref{online-linear layer cost} provides the comparison of the cost between Overdrive and \Name. We observe that \Name outperforms Overdrive by $4-42\times$ in runtime on LAN Setting and $3-16\times$ in WAN Setting. For example, Overdrive takes $165.4s$ and $1283.5s$ to compute $2^{15}$ ReLUs on LAN  and WAN setting, respectively. Whereas, \Name took just $5.1s$ and $110.2s$ in the respective network settings. For communication overhead, we observed that Overdrive required $401$KB of traffic to perform a single ReLU while we only need $8.33$KB, which is at least a $48\times$ improvement.  This is mainly due to the fact that our optimized $\mathtt{GC}$ substantially reduces the multiplication operations involved in evaluating in the $\mathtt{GC}$. Moreover, Overdrive needs to verify the correctness of the input from the model holder in the $\mathtt{GC}$, which is very expensive. Conversely, \Name designs lightweight consistency verification methods to achieve this.
\subsection{Performance of end-to-end secure inference}
\label{Secure Inference Performance}
\begin{table}[htb]
\centering
\small
\caption{Cost of  end-to-end secure inference}
\label{Complexity of  end-to-end secure inference}
\setlength{\tabcolsep}{1.9mm}{
\resizebox{\linewidth}{!}{\begin{tabular}{|c|c|c|c|c|c|c|c|}
\Xhline{1pt}
\multirow{9}*{\textbf{LeNet}}&\multirow{3}*{\textbf{Phases}}&\multicolumn{2}{c|}{\textbf{Comm.cost (MB)}} &\multicolumn{4}{c|}{\textbf{Running time (s)} }\\ \cline{3-8}
&&\multirow{2}*{Overdrive}&\multirow{2}*{\textbf{\Name}} & \multicolumn{2}{c|}{Overdrive} & \multicolumn{2}{c|}{ \Name (Speedup)} \\
&& & &  LAN & WAN & LAN & WAN\\  \cline{2-8}
&{Offline }& 3427.8& \textbf{209.6}&{235.5}&{246.8}&{\makecell[c]{\textbf{92.9}\\\textbf{ (2.5$\times$)}}}&{\makecell[c]{\textbf{104.6}\\\textbf{(2.4$\times$)}}}\\ \cline{2-8}
&{Online}& 2543.1& \textbf{54.0}&{32.8}&{254.9}&{\makecell[c]{\textbf{1.0}\\\textbf{ (32.6$\times$)}}}&{\makecell[c]{\textbf{21.9}\\\textbf{(11.6$\times$)}}}\\\cline{2-8}
&{Total}& 5970.9& \textbf{263.6}&{268.3}&{501.7}&{\makecell[c]{\textbf{93.9}\\\textbf{ (2.9$\times$)}}}&{\makecell[c]{\textbf{126.5}\\\textbf{(4.0$\times$)}}}\\ \Xhline{1pt}
\multirow{9}*{\textbf{ResNet18}}&\multirow{3}*{\textbf{Phases}}&\multicolumn{2}{c|}{\textbf{Comm.cost (MB)}} &\multicolumn{4}{c|}{\textbf{Running time (s)} }\\ \cline{3-8}
&&\multirow{2}*{Overdrive}&\multirow{2}*{\textbf{\Name}} & \multicolumn{2}{c|}{Overdrive} & \multicolumn{2}{c|}{ \Name (Speedup)} \\
&& & &  LAN & WAN & LAN & WAN\\  \cline{2-8}
&{Offline }& 2116018.6& \textbf{257257.7}&{238774.2}&{238957.4}&{\makecell[c]{\textbf{114003.1}\\\textbf{ (2.1$\times$)}}}&{\makecell[c]{\textbf{114978.8}\\\textbf{(2.1$\times$)}}}\\ \cline{2-8}
&{Online}& 19359.5& \textbf{459.4}&{177.0}&{1373.7}&{\makecell[c]{\textbf{5.5}\\\textbf{ (32.2$\times$)}}}&{\makecell[c]{\textbf{117.9}\\\textbf{(11.7$\times$)}}}\\\cline{2-8}
&{Total}& 2135378.1& \textbf{257717.1}&{238951.2}&{240331.1}&{\makecell[c]{\textbf{114008.6}\\\textbf{ (2.1$\times$)}}}&{\makecell[c]{\textbf{115096.7}\\\textbf{(2.1$\times$)}}}\\ \Xhline{1pt}
\end{tabular}}}
\end{table}
We compare the performance of \Name and Overdrive on real-world ML models. In our experiments, we choose ResNet-18 and LeNet, which are trained on the CelebA \cite{liu2015deep} and  C-MNIST datasets\cite{arjovsky2019invariant} respectively. Note that CelebA and  C-MNIST are widely used to check how fair a given trained model is. TABLE~\ref{Complexity of  end-to-end secure inference} shows the performance of \Name and Overdrive in terms of computation and communication overhead. Compared to Overdrive,  \Name  demonstrates an encouraging online runtime boost by $32.6\times$ and  $32.2\times$ over existing works  on LeNet and ResNet-18, respectively,  and at least  an order of magnitude  communication cost reduction. In online phase,  Overdrive takes $32.8s$ and $177s$ to compute single query on LeNet and ReNet-18, respectively. Whereas, \Name took just $1s$ and $5.5s$ in the respective network settings.  Consistent with the previous analysis, this stems from the customized optimization mechanism we designed for \Name.

\subsection{Comparison with other works}
\noindent\textbf{Compared with DELPHI}. We demonstrate that for the execution of non-linear layers, the communication overhead of \Name is even lower than the state-of-the-art scheme DELPHI \cite{mishra2020delphi} under hte semi-honest threat model. Specifically, for the $i$-th nonlinear layer,  DELPHI needs to calculate shares of  $f_i(\mathbf{v_i})$ in $\mathtt{GC}$ and share it with two parties. DELPHI requires at least $3\kappa$ additional AND gates, which incurs at least $6\kappa\lambda$ bits of communication, compared to only computing each bit of $f_i(\mathbf{v_i})$ in \Name. In our experiment, For $\kappa=44$, $\lambda=28$, our method gives roughly $9\times$ less communication for generating shares of $f_i(\mathbf{v_i})$, i.e.,  DELPHI required $32$KB of traffic to perform a single ReLU while we only need $8.33$KB.

\noindent\textbf{Compared with MUSE and SIMC}. We note that several works such as MUSE \cite{lehmkuhl2021muse} and SIMC \cite{chandran2021simc} have been proposed to address ML secure inference on the \textit{client malicious} threat model. Such a threat model considers that the server (i.e., the model holder) is semi-honest but the malicious client may arbitrarily violate the protocol to obtain private information. These works intuitively seem to translate to our application scenarios with appropriate modification. However, we argue that this is non-trivial. In more detail, in the \textit{client malicious} model, the client's inputs are encrypted  and sent to the semi-honest model holder, which performs all linear operations for speeding up  the computation. Since the model holder holds the model parameter in the plaintext, executing the linear layer only involves homomorphic operations between the plaintext and the ciphertext. Such type of computation is compatible with mainstream homomorphic optimization methods including GALA \cite{zhang2021gala} and GAZELLE \cite{juvekar2018gazelle}. However, in \Name, the linear layer operation cannot be done in the model holder because it is considered malicious. One possible approach is to encrypt the model data and perform linear layer operations with two-party interaction. This is essentially performing  homomorphic operations between ciphertext and ciphertext, which is not compatible with previous optimization strategies. Therefore, instead of simply fine-tuning MUSE \cite{lehmkuhl2021muse} and SIMC \cite{chandran2021simc}, we must redesign new parallel homomorphic computation methods to fit this new threat model. On the other hand, we observe that the techniques for nonlinear operations in MUSE \cite{lehmkuhl2021muse} and SIMC \cite{chandran2021simc} can clearly be transferred to \Name. However, our method still outperforms SIMC (an upgraded version of MUSE). This mainly stems from the fact that we only encapsulate the nonlinear part of ReLU into $\mathtt{GC}$ to further reduce the number of multiplication operations. Experiments show that our method is about one third of SIMC in terms of computing and communication overhead.

\section{Conclusion}
\label{sec:conclusion}
In this paper, we proposed \Name, the first secure inference framework  to check the fairness degree of a given ML model. We designed a series of optimization methods to reduce the
overhead of the offline stage. We also presented optimized $\mathtt{GC}$ to substantially speed up operations in the non-linear layers. In the future, we will focus on designing more efficient optimization strategies to further reduce the computation overhead of \Name, to make secure ML inference more suitable for a wider range of practical applications.
\bibliographystyle{plain}
\balance
\bibliography{PPDR}
\clearpage
\appendices
\section*{Appendix}
\setcounter{section}{0}

\section{Threat Model}
\label{A:threat model}
We formalize the threat model involved in \Name with the simulated paradigm \cite{lindell2017simulate}. We define two interactions to capture security: a real interaction by $P_0$ and $P_1$ in the presence of adversary $\mathbf{A}$ and an environment $Z$, and an ideal interaction where parties send their respective inputs to a trusted entity that computes functionally faithfully. Security requires that for any adversary $\mathbf{A}$ in real interaction, there exists a simulator $\mathbf{S}$ in ideal interaction, such that no environment $\mathbf{Z}$ can distinguish real interaction from ideal interaction. Specifically, let $f=(f_0, f_1)$  be the two-party functionality such that $P_0$ and $P_1$ invoke $f$ on inputs $a$ and $b$ to obtain $f_0(a, b)$ and $f_1(a, b)$, respectively.  We say a protocol $\pi$  securely implements $f$ if it holds the following properties.
\begin{itemize}

\item  \textbf{Correctness}: If $P_0$ and $P_1$ are both honest, then $P_0$ gets $f_0(a, b)$ and $P_1$ gets $f_1(a, b)$ from the execution of $\pi$ on the inputs  $a$ and $b$, respectively.
\item \textbf{Semi-honest Client Security}: For a semi-honest adversary $\mathbf{A}$ that compromises $P_1$, there exists a simulator $\mathbf{S}$ such that for any input $(a, b)$,   we have
\begin{small}
\begin{equation*}
\begin{split}
View_{\mathbf{A}}^{\pi}(a, b)\approx \mathbf{S}(b, f_1(a, b))
\end{split}
\end{equation*}
\end{small}
where $View_{\mathbf{A}}^{\pi}(a, b)$  represents the view of $\mathbf{A}$  during the execution of $\pi$, and $a$ and $b$ are the inputs of $P_0$ and $P_1$, respectively.
  $\mathbf{S}(b, f_1(a, b))$  represents the view simulated by $\mathbf{S}$ when it is given access to  $b$ and  $f_1(a, b)$. $\approx$ indicates computational indistinguishability
of   two distributions $View_{\mathbf{A}}^{\pi}(a, b)$ and $\mathbf{S}(b, f_1(a, b))$.
 \item \textbf{Malicious Model Holder Security}: For the malicious  adversary $\mathbf{A}$ that compromises $P_0$, there exists a simulator $\mathbf{S}$, such that for any input $b$ from $P_1$,  we have
 \begin{small}
\begin{equation*}
\begin{split}
Out_{P_1}, View_{\mathbf{A}}^{\pi}(b,\cdot)\approx \hat{Out}, \mathbf{S}^{f(b, \cdot)}
\end{split}
\end{equation*}
\end{small}
\end{itemize}
where $View_{\mathbf{A}}^{\pi}(b, \cdot)$ denotes $\mathbf{A}$'s view during the execution of $\pi$ with  $S_1$'s input $b$. $Out_{P_1}$ indicates the output of $P_1$ in the real protocol execution. Similarly,   $\hat{Out}$  and $\mathbf{S}^{f(b, \cdot)}$  represents the output of $P_1$  and the simulated view  in the ideal interaction.

\section{Proof of Theorem 1}
\label{Proof of Theorem 1}
\renewcommand\tablename{Figure}
\renewcommand \thetable{\arabic{table}}
\setcounter{table}{7}
\begin{proof}
\begin{table}
\begin{tabular}{|p{8cm}|}
\hline
 \textbf{Input:} $P_0$ holds $\langle \mathbf{X}\rangle_0$  uniformly chosen from $\mathbb{F}_p^{d_1\times d_2}$ and $ \langle \mathbf{y}\rangle_0 $ uniformly chosen from $\mathbb{F}_p^{d_2}$.   $P_1$ holds $\langle \mathbf{X}\rangle_1$  uniformly chosen from $\mathbb{F}_p^{d_1\times d_2}$, and $ \langle \mathbf{y}\rangle_1$ uniformly chosen from $\mathbb{F}_p^{d_2}$ and a MAC key $\alpha$ uniformly chosen from $\mathbb{F}_p$\\
 \textbf{Output:} $P_b$ obtains $\{[\![\mathbf{X}]\!]_b, [\![\mathbf{y}]\!]_b, [\![\mathbf{z}]\!]_b\}_{b\in \{0, 1\}}$, where $\mathbf{X}\ast \mathbf{y}=\mathbf{z}$.\\
\hline
\end{tabular}
\caption{Functionality of $\mathcal{F}_{Mtriple}$}
\label{Functionality FMriple}
\end{table}
Let $\mathcal{F}_{Mtriple}$ shown in Figure~\ref{Functionality FMriple} be the functionality of generating matrix-vector multiplication triple. We first prove security for semi-honest clients and then demonstrate security against malicious model holders.

\noindent\textbf{Semi-honest client security}. The simulator $\mathtt{Sim_c}$ samples $(pk, sk)\leftarrow \mathtt{KeyGen}(1^\lambda)$. The simulator and the semi-honest client run a secure two-party protocol to generate the public and secret keys for homomorphic encryption. When the  simulator accesses the ideal functionality, it provides $pk$ as output. In addition, the $\mathtt{Sim_c}$ sends $\mathtt{Enc}_{pk}(\mathbf{0})$ to the client along with the simulated zero-knowledge proof of well-formedness of ciphertexts. We now show the indistinguishability between real and simulated views by the following hybrid arguments.
\begin{itemize}
\item  $\mathtt{Hyb_1}$: This corresponds to the real execution of the protocol.
\item $\mathtt{Hyb_2}$: The simulator $\mathtt{Sim_c}$  runs the two-party computation protocol with the semi-honest client to generate the public and secret keys for homomorphic encryption. When the  simulator accesses the ideal functionality, we sample  $(pk, sk)\leftarrow \mathtt{KeyGen}(1^\lambda)$ and send $pk$ to the  semi-honest client. This hybrid is computationally indistinguishable to $\mathtt{hyb_1}$.
 \item $\mathtt{Hyb_3}$:  In this hybrid, instead of  sending the encryptions $c_1\leftarrow \mathtt{Enc}(pk, \langle \mathbf{X}\rangle_0)$  and $c_2\leftarrow \mathtt{Enc}(pk, \langle \mathbf{Y}\rangle_0)$ to $P_1$, $\mathtt{Sim_c}$ sends ciphertexts with all 0s (i.e., $\mathtt{Enc}_{pk}(\mathbf{0})$) to the client. $\mathtt{Sim_c}$ also provides  a zero-knowledge (ZK) proof of plaintext knowledge of the ciphertexts. For any two plaintexts, FHE ensures that an adversary cannot distinguish them from their ciphertexts. In addition, zero-knowledge proofs also guarantee the indistinguishability of two ciphertexts. Therefore, this hybrid is indistinguishable from the previous one.
\end{itemize}
\noindent\textbf{Malicious model holder security}. The simulator $\mathtt{Sim_m}$ samples $(pk, sk)\leftarrow \mathtt{KeyGen}(1^\lambda)$. The simulator and the semi-honest client run a secure two-party protocol to generate the public and secret keys for homomorphic encryption. When the  simulator accesses the ideal functionality, it provides $(pk, sk)$ as outputs. Once $P_0$ sends $c_1\leftarrow \mathtt{Enc}(pk, \langle \mathbf{X}\rangle_0)$  and $c_2\leftarrow \mathtt{Enc}(pk, \langle \mathbf{Y}\rangle_0)$,  $\mathtt{Sim_m}$ verifies the validity of the ciphertext from the client. If the verification is passed, $\mathtt{Sim_m}$   extracts $\langle \mathbf{X}\rangle_0$  and $\langle \mathbf{Y}\rangle_0$  and the randomness used for generating
these ciphertexts,  since it  has access to the client's input. Then, $\mathtt{Sim_m}$  samples $\langle \mathbf{X}\rangle_1$ and $\langle \mathbf{Y}\rangle_1$, and queries the ideal functionalities on the input $\langle \mathbf{X}\rangle_0$, $\langle \mathbf{X}\rangle_1$, $\langle \mathbf{Y}\rangle_0$ and  $\langle \mathbf{Y}\rangle_1$ to obtain  $(\langle \alpha \mathbf{X}\rangle_0, \langle \alpha \mathbf{y}\rangle_0, \langle \alpha \mathbf{z}\rangle_0, \langle \mathbf{z} \rangle_0)$. Then, $\mathtt{Sim_m}$ uses these outputs and the randomness used to generate the initial ciphertexts to construct  the four simulated ciphertexts. It sends the simulated ciphertexts to the client.
\begin{itemize}
\item  $\mathtt{Hyb_1}$: This corresponds to the real execution of the protocol.
\item $\mathtt{Hyb_2}$: The simulator $\mathtt{Sim_m}$  runs the two-party computation protocol with the malicious model holder to generate the public and secret keys for homomorphic encryption. When the  simulator accesses the ideal functionality, we sample  $(pk, sk)\leftarrow \mathtt{KeyGen}(1^\lambda)$ and send them to the  malicious model holder. This hybrid is computationally indistinguishable to $\mathtt{hyb_1}$.
 \item $\mathtt{Hyb_3}$:  In this hybrid, $\mathtt{Sim_m}$ checks the validity of the ciphertext from the client. If the zero-knowledge proofs are valid, $\mathtt{Sim_m}$   extracts $\langle \mathbf{X}\rangle_0$  and $\langle \mathbf{Y}\rangle_0$  and the randomness used for generating these ciphertexts,  since it  has access to the client's input. The properties of zero-knowledge proofs ensure that this hybrid  is indistinguishable from the previous one.
 \item $\mathtt{Hyb_4}$:   $\mathtt{Sim_m}$  exploits the functional privacy of FHE to generate  $c_3=\mathtt{Enc}_{pk}( \alpha(\langle \mathbf{X}\rangle_1+\langle \mathbf{X}\rangle_0)-\langle \alpha \mathbf{X}\rangle_1)$, $c_4=\mathtt{Enc}_{pk}( \alpha(\langle \mathbf{Y}\rangle_1+\langle \mathbf{Y}\rangle_0)-\langle \alpha \mathbf{Y}\rangle_1)$, $c_5=\mathtt{Enc}_{pk}(\alpha(\mathbf{X}\odot \mathbf{Y})-\langle \alpha \mathbf{Z}\rangle_1)$, and $c_6=\mathtt{Enc}_{pk}((\mathbf{X}\odot \mathbf{Y})-\langle \mathbf{Z}\rangle_1)$. This hybrid is computationally indistinguishable
to the previous hybrid from the function privacy of the FHE scheme. Note that view of the model holder in $\mathtt{Hyb_4}$ is identical to the view generated by $\mathtt{Sim_m}$.

\end{itemize}
\end{proof}

\section{ Conversion between convolution and matrix multiplication}
\label{ Conversion between convolution and matrix multiplication}
\setcounter{figure}{8}
\begin{figure}[htb]
\centering
\includegraphics[width=0.5\textwidth]{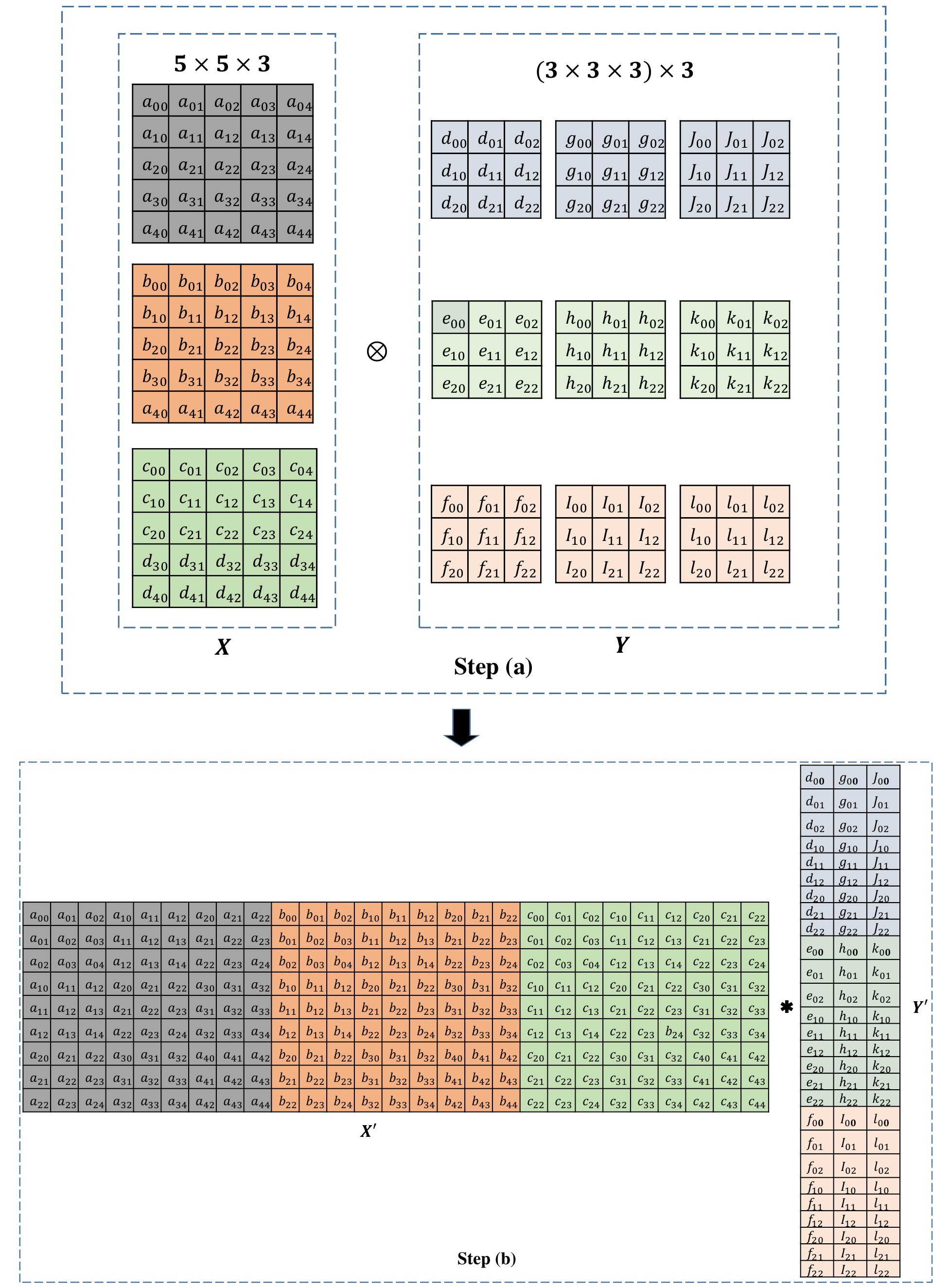}
\caption{Conversion between convolution and matrix multiplication}
\label{Conversion between convolution and matrix multiplication}
\end{figure}
\setcounter{table}{8}
Figure~\ref{Conversion between convolution and matrix multiplication} provides an example to convert a given  convolution into the corresponding matrix multiplication.  As shown in Figure~\ref{Conversion between convolution and matrix multiplication}, given  input tensor of size $ 5\times 5$ with $3$ channels, denoted as $\mathbf{X}$, $3$  kernels with  a size of $(2+1)\times (2+1)\times 3$ denote as tensor $\mathbf{Y}$,   the convolution between $\mathbf{X}$ and $\mathbf{Y}$ are  converted  an equivalent matrix multiplication $\mathbf{X'}$ and $\mathbf{Y'}$, where the number of turns to zero-pad is $0$, and stride $s=1$. Specifically,  we construct a matrix $\mathbf{X'}$ with dimension $(9\times 27)$, where $\mathbf{X'}_{(i, j)(\Delta_i, \Delta_j,k)}=\mathbf{X}_{i+\Delta_i, j+\Delta_j, k}$. Similarly, we construct a matrix $\mathbf{Y'}$ of dimension $(27 \times 3)$ such that $\mathbf{Y'}(\Delta_i, \Delta_j, k)k'=\mathbf{Y}_{\Delta_i, \Delta_j, k'}$. Then, the original convolution operation is transformed into $\mathbf{Z'}= \mathbf{X'}\ast \mathbf{Y'}$, where $\mathbf{Z'}_{(ij)k'}$ $=\mathbf{Z}_{ijk'}$.

\section{Proof of Theorem 4}
\label{Proof of Theorem 4}
\begin{proof}
\noindent\textbf{Semi-honest client security}.The security of the protocol $\pi_{ONlin}$ against the semi-honest client $P_1$ is evident by observing the execution of the protocol. This stems from the fact that $P_1$ does  obtain output in OT$_{\lambda}^{\kappa}$ and does not receive any information from $P_0$ in subsequent executions. Here we focus on the security analysis of $\pi_{ONlin}$ against malicious model holder $P_0$.

\noindent\textbf{Malicious model holder security}.  We first define the functionality of the protocol  $\pi_{ONlin}$, denoted as $\mathcal{F}_{ONlin}$, as shown in Figure~\ref{Functionality of the nonlinear layer}. We use $\mathtt{Real}$ to refer to the view of the real interaction between $P_1$ and the adversary $\mathcal{A}$ controlling $P_0$, and then demonstrate $\mathtt{Real}$  indistinguishability from the simulated view interacted by the simulator $\mathtt{Sim_m}$ and $\mathcal{A}$ through standard hybrid arguments. In the following we will define three hybrid executions $\mathtt{Hyb_1}$, $\mathtt{Hyb_2}$ and $\mathtt{Hyb_3}$. We  prove that $\pi_{ONlin}$ is secure from the malicious model holder $P_0$ by proving indistinguishability among these hybrid executions.
\setcounter{table}{9}
\begin{table}[htb]
\centering
\footnotesize
\begin{tabular}{|p{8cm}|}
\hline
Function $f: \mathbb{F}_p \rightarrow \mathbb{F}_p$.\\
 \textbf{Input:} $P_1$ holds $\left \langle \mathbf{v}_i \right \rangle_1 \in \mathbb{F}_{p}$  and  a MAC key $\alpha$ uniformly chosen from $\mathbb{F}_p$.  $P_0$ holds  $\left \langle \mathbf{v}_i \right \rangle_0 \in \mathbb{F}_{p}$.\\
 \textbf{Output:} $P_b$ obtains $\{(\langle \alpha\mathbf{v}_i\rangle_b, \langle f(\mathbf{v}_i)\rangle_b, \langle \alpha f(\mathbf{v}_i)\rangle_b)\}$ for $b\in\{0, 1\}$.\\
\hline
\end{tabular}
\caption{Functionality of the nonlinear layer $\mathcal{F}_{ONlin}$}
\label{Functionality of the nonlinear layer}
\end{table}

$\mathtt{Hyb_1}$: This hybrid execution is identical to $\mathtt{Real}$ except in the authentication phase.  To be precise, in the authentication phase, the simulator $\mathtt{Sim_m}$  use labels $\hat{\mathtt{lab}_{i, j}^{out}}$ (described below) to replace the labels $\mathtt{lab}_{i, j}^{out}$ used in $\mathtt{Real}$. Please note that in this hybrid the simulator $\mathtt{Sim_m}$  can access $P_1$' input $\left \langle \mathbf{v}_i \right \rangle_1$ and $\alpha$, where $\left \langle \mathbf{v}_i \right \rangle_0 +\left \langle \mathbf{v}_i \right \rangle_1= \mathbf{v}_i$. Let $\delta=(\mathbf{v}_i||sign(\mathbf{v}_i))$. Therefore, for $i\in[2\kappa]$, we set  $\hat{\mathtt{lab}_{i, j}^{out}}=\mathtt{lab}_{i, j}^{out}$ if $j=\delta[i]$, otherwise, $\hat{\mathtt{lab}_{i, 1-\delta[i]}^{out}}$ (\textit{i.e.}, the ``other" label) is set to a random value chosen from $\{0, 1\}^{\lambda}$ uniformly, where the first bit of  $\hat{\mathtt{lab}_{i, 1-\delta[i]}^{out}}$ is $1-\varsigma_{i, \delta[i]}$. We provide the formal description of $\mathtt{Hyb_1}$ as follows, where the  indistinguishability between the view of $\mathcal{A}$ in $\mathtt{Real}$ and $\mathtt{Hyb_1}$ is directly derived from the authenticity of the garbled circuit.
\begin{itemize}
    \item[1.] $\mathtt{Sim_m}$ receives $\left \langle \mathbf{v}_i \right \rangle_0$ from $\mathcal{A}$ as the input of OT$_{\lambda}^{\kappa}$.
\item[2.] Garbled Circuit Phase:
\begin{itemize}
\item For   $booln^f$, $\mathtt{Sim_m}$ first computes $\mathtt{Garble}(1^\lambda, booln^f)\rightarrow (\mathtt{GC}, \{ \{\mathtt{lab}_{i,j}^{in}\},\{\mathtt{lab}_{i,j}^{out}\}\}_{j\in\{0,1\}})$ for each $i\in[2\kappa]$,  and then for ${i\in\{\kappa+1, \cdots, 2\kappa\}}$ sends  $\{\mathtt{\tilde{lab}}_{j}^{in}=\mathtt{lab}_{j, \left \langle \mathbf{v}_i \right \rangle_0[j]}^{in}\}$ to $\mathcal{A}$ as the output of  OT$_{\lambda}^{\kappa}$. In addition, $\mathtt{Sim_m}$ sends the garbled  circuit $\mathtt{GC}$ and its garbled inputs $\{ \{\mathtt{\tilde{lab}}_{j}^{in}=\mathtt{lab}_{j, \left \langle \mathbf{v}_i \right \rangle_1[j]}\}_{j\in[\kappa]}$ to $\mathcal{A}$.
\end{itemize}
\item[3.] Authentication Phase 1:
\begin{itemize}
\item $\mathtt{Sim_m}$ sets $\delta=(\mathbf{v}_i||sign(\mathbf{v}_i))$.
\item For $i\in[2\kappa]$, $\mathtt{Sim_m}$ sets $\hat{\mathtt{lab}_{i, j}^{out}}=\mathtt{lab}_{i, \delta[i]}^{out}$ if $j=\delta[i]$.
\item For $i\in[2\kappa]$, if $j=1-\delta[i]$, $\mathcal{S}$ sets $\hat{\mathtt{lab}_{i, j}^{out}}$ as  a random value chosen from $\{0, 1\}^{\lambda}$ uniformly, where first bit of  $\hat{\mathtt{lab}_{i, 1-\delta[i]}^{out}}$ is $1-\varsigma_{i, \delta[i]}$.
\item $\mathtt{Sim_m}$ computes and sends $\{ct_{i,j}, \hat{ct_{i, j}}\}_{i\in [\kappa], j\in\{0, 1\}}$ to $\mathcal{A}$ using $\hat{\mathtt{lab}_{i, j}^{out}}_{i\in [2\kappa], j\in\{0, 1\}}$. This process is same as  in $\mathtt{Real}$ execution using  ${\mathtt{lab}_{i, j}^{out}}_{i\in [2\kappa], j\in\{0, 1\}}$.
\end{itemize}
\item[4.]Local Computation Phase: The execution of this phase is indistinguishable from $\mathtt{Real}$ since no information needs to be exchanged between $\mathtt{Sim_m}$ and $\mathcal{A}$.
\item[5.]  Authentication Phase 2:
\begin{itemize}
\item The execution is identical to $\mathtt{Real}$.
\end{itemize}
    \end{itemize}

$\mathtt{Hyb_2}$: We will make four changes to $\mathtt{Hyb_1}$ to obtain $\mathtt{Hyb_2}$, and argue that $\mathtt{Hyb_2}$ is indistinguishable from $\mathtt{Hyb_1}$ from the adversary's view. To be precise, let $\mathtt{GCEval}(\mathtt{GC}, \{\mathtt{\tilde{lab}}_{i}^{in}\}_{i\in[2\kappa]})\rightarrow \{(\tilde{\varsigma}_{i}||\tilde{\vartheta}_{i})_{i\in[2\kappa]}= \{\mathtt{\tilde{lab}}_{i}^{out}\}_{i\in[2\kappa]}\}$. First, we have $\{\mathtt{\tilde{lab}}_{i}^{out}= \mathtt{{lab}}_{i, \delta[i]}^{out}\}_{i\in[2\kappa]}$ based on the correctness of garbled circuits.  Second, we note that   ciphertexts $\{ct_{i,1-\tilde{\varsigma}_{i}}, \hat{ct}_{i, 1-\tilde{\varsigma}_{i+\kappa}}\}_{i\in [\kappa]}$  are computed by exploiting the ``other" set of output labels picked uniformly in $\mathtt{Hyb_1}$. Based on this observation,  $\mathtt{Sim_m}$ actually can directly sample them uniformly at random.  Third, in real execution,  for every $i\in [\kappa]$ and $j\in \{0, 1\}$, $P_1$ sends $ct_{i, \varsigma_{i, j}}$ and $\hat{ct}_{i, \varsigma_{i+\kappa, j}}$ to $P_0$, and then $P_0$ computes $c_i$, $d_i$ and $e_i$ based on them. To simulate this,  $\mathtt{Sim_m}$ only needs to uniformly select random values $c_i$, $d_i$ and $e_i$ which satisfy $\langle \alpha \mathbf{v}_i\rangle_0=(-\sum_{j\in[\kappa]}c_j2^{j-1})$, $\langle sign(\mathbf{v}_i)\rangle_0=(-\sum_{j\in[\kappa]}d_j2^{j-1})$ and $\langle \alpha sign(\mathbf{v}_i)\rangle_0=(-\sum_{j\in[\kappa]}e_j2^{j-1})$. Finally, since $\langle \alpha \mathbf{v}_i\rangle_0$, $\langle sign(\mathbf{v}_i)\rangle_0$ and $\langle \alpha sign(\mathbf{v}_i)\rangle_0$ are part the outputs of functionality $\mathcal{F}_{ONlin}$, $\mathtt{Sim_m}$  can obtain these as the outputs from $\mathcal{F}_{ONlin}$.  In summary,  with the above changes, $\mathtt{Sim_m}$ no longer needs  $\alpha$ of $P_1$. We provide the formal description of $\mathtt{Hyb_2}$ as follows.
 \begin{itemize}
    \item[1.] $\mathtt{Sim_m}$  receives $\left \langle \mathbf{v}_i \right \rangle_0$ from $\mathcal{A}$ as the input of OT$_{\lambda}^{\kappa}$.
\item[2.] Garbled Circuit Phase: Same as $\mathtt{Hyb_1}$.
\item[3.] Authentication Phase 1:
\begin{itemize}
\item $\mathtt{Sim_m}$  runs $\mathtt{GCEval}(\mathtt{GC}, \{\mathtt{\tilde{lab}}_{i}^{in}\}_{i\in[2\kappa]})\rightarrow \{(\tilde{\varsigma}_{i}||\tilde{\vartheta}_{i})_{i\in[2\kappa]}= \{\mathtt{\tilde{lab}}_{i}^{out}\}_{i\in[2\kappa]}\}$.
\item $\mathtt{Sim_m}$  learns $\langle \alpha \mathbf{v}_i\rangle_0$, $\langle sign(\mathbf{v}_i)\rangle_0$ and $\langle \alpha sign(\mathbf{v}_i)\rangle_0$ by sending  $\langle\mathbf{v}_i\rangle_0$ to $\mathcal{F}_{ONlin}$.
\item For $j\in[\kappa]$, $\mathtt{Sim_m}$ uniformly selects random values $c_j$, $d_j$ and $e_j\in \mathbb{F}_{p}$ which satisfy $\langle \alpha \mathbf{v}_i\rangle_0=(-\sum_{j\in[\kappa]}c_j2^{j-1})$, $\langle sign(\mathbf{v}_i)\rangle_0=(-\sum_{j\in[\kappa]}d_j2^{j-1})$ and $\langle \alpha sign(\mathbf{v}_i)\rangle_0=(-\sum_{j\in[\kappa]}$ $e_j2^{j-1})$.
\item  For every $i\in [\kappa]$, $\mathtt{Sim_m}$ computes $ct_{i, \tilde{\varsigma}_{i}}=c_i \oplus \mathbf{Trun}_\kappa(\tilde{\vartheta}_{i})$  and $\hat{ct}_{i, \tilde{\varsigma}_{i+\kappa}}=(d_i||e_i)\oplus\mathbf{Trun}_{2\kappa}(\tilde{\vartheta}_{i+\kappa})$. For ciphertexts $\{ct_{i,1-\tilde{\varsigma}_{i}}, \hat{ct}_{i, 1-\tilde{\varsigma}_{i+\kappa}}\}_{i\in [\kappa]}$, $\mathtt{Sim_m}$    samples them uniformly at random.
\item $\mathtt{Sim_m}$  sends $\{ct_{i,j}, \hat{ct_{i, j}}\}_{i\in [\kappa], j\in\{0, 1\}}$ to $\mathcal{A}$.
\end{itemize}
\item[4.]Local Computation Phase: The execution of this phase is indistinguishable from $\mathtt{Real}$ since no information needs to be exchanged between $\mathtt{Sim_m}$  and $\mathcal{A}$.
\item[5.]  Authentication Phase 2:
\begin{itemize}
\item The execution is identical to $\mathtt{Real}$.
\end{itemize}
    \end{itemize}

$\mathtt{Hyb_3}$: This hybrid we  remove $\mathtt{Sim_m}$'s dependence on $P_1$'s input $\langle \mathbf{v}_i \rangle_1$. The indistinguishability between $\mathtt{Hyb_3}$ and $\mathtt{Hyb_2}$ stems from the security of the garbled circuit. We provide the formal description of $\mathtt{Hyb_3}$ below.
 \begin{itemize}
    \item[1.] $\mathtt{Sim_m}$  receives $\left \langle \mathbf{v}_i \right \rangle_0$ from $\mathcal{A}$ as the input of OT$_{\lambda}^{\kappa}$.
\item[2.] Garbled Circuit Phase:
\begin{itemize}
\item $\mathtt{Sim_m}$  samples $\mathtt{Garble}(1^\lambda, booln^f)\rightarrow (\tilde{\mathtt{GC}}, \{\hat{\mathtt{lab}}_{i}^{in}\}_{i\in\{\kappa+1, \cdots, 2\kappa\}})$  and sends $\{\hat{\mathtt{lab}}_i\}_{i\in\{\kappa+1, \cdots, 2\kappa\}}$ to $\mathcal{A}$ as the output of  OT$_{\lambda}^{\kappa}$. $\mathcal{S}$ also sends $\tilde{\mathtt{GC}}$ and $\{\hat{\mathtt{lab}}_{i}^{in}\}_{i\in[\kappa]}$ to $\mathcal{A}$.
\end{itemize}
\item[3.] Authentication Phase 1:
\item[4.]Local Computation Phase: The execution of this phase is indistinguishable from $\mathtt{Real}$ since no information needs to be exchanged between $\mathtt{Sim_m}$  and $\mathcal{A}$.
\item[5.]  Authentication Phase 2: Same as $\mathtt{Hyb_2}$, where $\mathtt{Sim_m}$ uses $(\langle \mathbf{v}_i\rangle_0, \tilde{\mathtt{GC}},$ and $ \{\hat{\mathtt{lab}}_{i}^{in}\}_{i\in[2\kappa]})$ to process this phase for $\mathcal{A}$.
    \end{itemize}
\end{proof}

\section{Proof of Theorem 5}
\label{Proof of Theorem 5}

\begin{proof}
Assuming that $P_0$ tampered with any of the inputs it holds during the execution, $q$ can be expressed as follows
$$q=\Delta+\sum_{j\in [t]} \mathbf{r}_j(\mathbf{\rho}_j-\alpha  \mathbf{\tau}_j)+  \sum_{i\in [m-1]}\mathbf{r}_i(\mathbf{\tau}_i-\mathbf{\xi}_i)$$
where $\Delta$ refers to the increment caused by $P_0$'s violation of the protocol.   The above formula can be expressed as a 1-degree polynomial function $Q(\alpha)$ with respect to the variable $\alpha$. It is clear that $Q(\alpha)$ is a non-zero polynomial whenever $P_0$ introduces errors. Further, when $Q(\alpha)$ is a non-zero polynomial, it has at most one root. Hence, over the choice of $\alpha$, the probability that $Q(\alpha)=0$ is at most $1/p$ . Therefore, the probability that $P_1$ aborts is at least $1-1/p$ when $P_0$ cheats.
\end{proof}

\end{document}